\algrenewcommand\algorithmicrequire{\textbf{Precondition:}}
\algrenewcommand\algorithmicensure{\textbf{Postcondition:}}
\definecolor{cornellred}{rgb}{0.7, 0.11, 0.11}
\definecolor{dgreen}{rgb}{0.0, 0.5, 0.0}
\definecolor{ballblue}{rgb}{0.13, 0.67, 0.8}
\definecolor{royalblue(web)}{rgb}{0.25, 0.41, 0.88}
\definecolor{bleudefrance}{rgb}{0.19, 0.55, 0.91}
\definecolor{royalazure}{rgb}{0.0, 0.22, 0.66}
\newcommand{\Dfun}{\mathcal{F}^{-}}
\newcommand {\compl} {\widetilde}
\newcommand {\Exp}       {\mathbb{E}}
\newcommand {\E}     [1] {\Exp\left[#1\right]}
\newcommand{\given}{\mid}
\newtheorem{theorem}{Theorem}[section]
\newtheorem{lemma}[theorem]{Lemma}
\newtheorem{fact}[theorem]{Fact}
\newtheorem{proposition}[theorem]{Proposition}
\newtheorem{corollary}[theorem]{Corollary}
\newtheorem{definition}[theorem]{Definition}
\newcommand{\ct}{\mathcal T}
\newcommand{\topt}{\ct^*}
\newcommand{\stopt}{\widehat{\ct^*}}
\newcommand{\mw}[1]{\mathcal F^+(#1)}
\DeclareFontFamily{U}{matha}{\hyphenchar\font45}
\DeclareFontShape{U}{matha}{m}{n}{
      <5> <6> <7> <8> <9> <10> gen * matha
      <10.95> matha10 <12> <14.4> <17.28> <20.74> <24.88> matha12
      }{}
\DeclareSymbolFont{matha}{U}{matha}{m}{n}
\DeclareMathSymbol{\wedge}         {2}{matha}{"5E}
\DeclareMathSymbol{\vee}           {2}{matha}{"5F}
\newcommand{\NN}{\mathbb{N}}
\newcommand{\eps}{\varepsilon}
\title{Bisect and Conquer: \\Hierarchical Clustering via Max-Uncut Bisection}
\author{\\
Sara Ahmadian\thanks{Google Research NY, New York.}
\and \\
Vaggos Chatziafratis\thanks{Stanford University, California. Supported by an Onassis Foundation Scholarship.}
\and \\
Alessandro Epasto$^*$
\and \\
Euiwoong Lee\thanks{New York University, New York.}
\and \\
Mohammad Mahdian$^*$
\and \\
Konstantin Makarychev\thanks{Northwestern University, Evanston, IL.}
\and \\
Grigory Yaroslavtsev\thanks{Indiana University, Bloomington. Supported by NSF grant 1657477.}
}
\begin{document}
\date{}
\maketitle
\newenvironment{myfont}{\fontfamily{pag}\selectfont}{\par}

\begin{abstract}
\normalsize
Hierarchical Clustering is an unsupervised data analysis method which has been widely used for decades. 
Despite its popularity, it had an underdeveloped analytical foundation and to address this, Dasgupta 
recently introduced an optimization viewpoint of hierarchical clustering with pairwise similarity information that spurred a line of work shedding light on old algorithms (e.g., Average-Linkage), but also designing new algorithms. 
Here, for the maximization dual of Dasgupta's objective (introduced by Moseley-Wang), we present polynomial-time $.4246$ approximation algorithms that use \textsc{Max-Uncut Bisection} as a subroutine.
The previous best worst-case approximation factor in polynomial time was $.336$, improving only slightly over Average-Linkage which achieves $1/3$. 
Finally, we complement our positive results by providing APX-hardness (even for 0-1 similarities), under the \textsc{Small Set Expansion} hypothesis.

%We complement our improved upper bounds by proving APX-hardness for the Moseley-Wang objective 
%via a reduction from Small Set Expansion.
%\input{tex/abstract.tex}
\end{abstract}

\section{Introduction}

%1. Hierachical clustering what is it? where is it used?\\

Hierarchical Clustering (HC) is a popular unsupervised learning method which produces a recursive decomposition of a dataset into clusters of increasingly finer granularity. The output of HC is a hierarchical representation of the dataset in the form of a tree (a.k.a. dendrogram) whose leaves correspond to the data points. 
The internal nodes of the tree correspond to clusters organized in a hierarchical fashion.
%its lowest level (leaves) contains singleton data points, the clusters at any level are created by merging clusters at the next lower level, and its highest level is a cluster of all datapoints.

Since HC captures cluster structure at all levels of granularity simultaneously, it
offers several advantages over a basic flat clustering (a partition of the data into
a fixed number of clusters) and it has been used for several decades (see e.g., Ward's method~\cite{ward1963hierarchical}). It is one of the most popular methods across a wide range of fields e.g., in phylogenetics~\cite{sneath1962numerical,jardine1968model}, where many of the so-called \textit{linkage-based} algorithms (like Average/Single/Complete-Linkage) originated, in gene expression data analysis~\cite{eisen1998cluster}, the analysis
of social networks~\cite{leskovec2014mining,mann2008use}, bioinformatics~\cite{diez2015novel}, image
and text classification~\cite{steinbach2000}, and even in financial markets~\cite{tumminello2010correlation}. 
See classic texts~\cite{JMF99,MRS08,J10} for a standard introduction to HC and linkage-based methods.
Due to the importance of HC, many variations (including linkage-based methods) are also currently implemented in standard scientific computing packages and in large-scale systems~\cite{bateni2017affinity}.

Despite the plethora of applications, until recently there wasn't a concrete objective associated with HC methods (except for the Single-Linkage clustering which enjoys a simple combinatorial structure due to the connection to minimum spanning trees~\cite{GR69}). This is in stark contrast with flat clustering methods where $k$-means, $k$-median, $k$-center constitute some standard objectives. Recent breakthrough work~\cite{D16} by Dasgupta addressed this gap by introducing the following objective function:

\begin{definition}[HC Objective~\cite{D16}]
Given a similarity matrix with entries $w_{ij} \ge 0$ corresponding to similarities between data points $i$ and $j$, let the hierarchical clustering objective for a tree $\ct$ be defined as:
$$\Dfun(\ct) = \sum_{i < j} w_{ij} |\ct(i,j)|,$$
where $|\ct(i,j)|$ denotes the number of leaves under the least common ancestor of $i$ and $j$.
\end{definition}

The goal of HC under Dasgupta's objective is to minimize the function $\Dfun(\ct)$ among all possible trees. Intuitively, the objective encourages solutions that do not separate similar points (those with high $w_{ij}$) until the lower levels of the tree; this is because $|\ct(i,j)|=n$, if $i,j$ are separated at the top split of the tree $\ct$, whereas $|\ct(i,j)|=2$, if $i,j$ are separated at the very last level.

Dasgupta~\cite{D16} further showed that many desirable properties hold for this objective with respect to recovering ground truth hierarchical clusterings. This was later strengthened both theoretically and experimentally~\cite{cohenaddad,royNIPS}. Although, it is not hard to see that the optimum tree should be binary, it is not clear how one can optimize for it given the vast search space. Surprisingly,~\cite{D16} established a connection with a standard graph partitioning primitive, {\sc Sparsest-Cut},  which had been previously used to obtain HC in practice~\cite{mann2008use}. Later work~\cite{vaggosSODA} further showed a black-box connection: an $\alpha$-approximation for {\sc Sparsest-Cut} or {\sc Balanced-Cut} gives an $O(\alpha)$-approximation for $\Dfun(\ct)$.
Hence an $O(\sqrt{\log n})$-approximation can be computed in polynomial time by using the celebrated result of~\cite{ARV08}. A constant-factor hardness is also known under the \textsc{Small Set Expansion} hypothesis~\cite{vaggosSODA,royNIPS}.
% where $n$ is the number of vertices in the graph. 

Building on Dasgupta's work, Moseley and Wang~\cite{joshNIPS} gave the first approximation analysis of the  Average-Linkage method. Specifically, for the complement of Dasgupta's objective (see \hyperref[def:MW]{Definition}~\ref{def:MW}), they proved that Average-Linkage achieves a $1/3$ approximation in the worst case. That factor was marginally improved to $.3363$ in~\cite{charikar2019hierarchical} via an ad-hoc semidefinite programming formulation of the problem. This was the state-of-the-art in terms of approximation prior to this work and was arguably complicated and impractical.

%2. HC need for theoretical understanding, lack of objectives\\
%3. Dasgupta definition and recent work consequences\\
%4. Approximation results\\

\noindent{\bf Contributions:} In this paper, we extend the recent line of research initiated by Dasgupta's work on objective-based hierarchical clustering for similarity data, by significantly beating the previous best-known approximation factor for the~\cite{joshNIPS} objective and doing so with natural algorithms. Our algorithm is based on a combination of \textsc{Max-Uncut Bisection}  and \textsc{Average-Linkage} and guarantees .4246 of the value of the optimum hierarchical clustering, whereas previous work based on semidefinite programming~\cite{charikar2019hierarchical} only achieved .3363. An advantage of our algorithm is that it uses the \textsc{Max-Uncut Bisection} primitive as a black-box and the approximation ratio gracefully degrades as a function of the quality of this primitive; this is in contrast with previous approaches~\cite{charikar2019hierarchical} which solve complicated convex relaxations tailored to the objective.
Since both theoretical and practical solutions for \textsc{Max-Uncut Bisection} are readily available in \cite{austrin,SS13}, this results in a family of algorithms which can be analyzed both rigorously and empirically.
 We also complement our algorithmic results with hardness of approximation (APX-hardness): assuming the \textsc{Small Set Expansion} hypothesis, we prove that even for 0-1 similarities, there exists $\eps >0$, such that it is NP-hard to approximate the ~\cite{joshNIPS} objective within a factor of $(1-\eps)$. A summary of our results compared to the previous work is given in \hyperref[tab:result]{Table}~\ref{tab:result}. Here we also point out that .3333 is a simple baseline achieved by a random binary tree~\cite{joshNIPS} and hence our work gives the first major improvement over this baseline.

\begin{table}[!ht]  
  \centering
  \begin{tabular}{|c||c|c|c|}
    \cline{2-4}
    \multicolumn{1}{c|}{} & \cite{joshNIPS} & \cite{charikar2019hierarchical} & { Our paper} \\ \hline
Approx. & 1/3 & .3363 & {\bf.4246}  \\ \hline
Hardness & NP-hard & - & {\bf APX-hard} \\ &&& under \textsc{SSE}  \\ \hline
  \end{tabular}
  \caption{\label{tab:result}Our results for the~\cite{joshNIPS} objective.}
\end{table}

% complement dasgupta's algorithm for sparsest cut/balanced cut

\noindent{\bf Further Related Work:} 
HC has also been studied in the ``semi-supervised'' or ``interactive'' case, where prior knowledge or expert advice is available, with or without the geometric information on the data points. Examples of these works include~\cite{emamjomeh2018adaptive} where they are interested in the number of (triplet) queries necessary to determine a ground truth HC, and~\cite{dasguptaICML, vaggosICML} who provide techniques for incorporating triplet constraints (the analog of split/merge feedback~\cite{balcan2008clustering, awasthi2017local} or must-link/cannot-link constraints~\cite{wagstaff2001constrained,wagstaff2000clustering} in standard flat clustering) to get better hierarchical trees. Furthermore, assuming the data points lie in a metric space (instead of $w_{ij}$ being arbitrary similarities), there are previous works that measure the quality of a HC using standard flat-clustering objectives like $k$-means, $k$-median or $k$-center as proxies~\cite{charikar2004incremental,dasgupta2002performance,plaxton2006approximation,lin2010general} in order to get approximation guarantees. Finally, in high dimensions when even running simple algorithms like single-linkage becomes impractical due to the exponential dependence
on the dimension~\cite{yaroslavtsev2017massively}, one can still efficiently achieve good approximations if the similarities are generated with the fairly common Gaussian Kernel and other smooth similarity measures, as is shown in~\cite{charikar2018hierarchical}.

\noindent{\bf Organization:} We start by providing the necessary background in \hyperref[sec:prelim]{Section}~\ref{sec:prelim}. Then, we give our .4246 approximation algorithm for HC based on \textsc{Max-Uncut Bisection} in \hyperref[sec:improved]{Section}~\ref{sec:improved} and our hardness of approximation result in \hyperref[sec:hardness]{Section}~\ref{sec:hardness}. Our conclusion is given in \hyperref[sec:conclusion]{Section}~\ref{sec:conclusion}.

\section{Preliminaries}
\label{sec:prelim}

We begin by setting the notation used throughout the paper. The input to the HC problem is given by an (explicit or implicit) $n \times n$ matrix of pairwise similarities with non-negative entries $w_{ij}\ge0$ corresponding to the similarity between the $i$-th and $j$-th data point. Sometimes we also refer to the underlying weighted graph as $G(V,E,w)$.

%\begin{itemize}
%\item 
 We denote by $\ct$ a rooted tree whose leaves correspond to the $n$ vertices. For two leaves $i,j$, $\ct(i,j)$ denotes the subtree rooted in the least common ancestor of $i$ and $j$ in $\ct$ and $|\ct(i,j)|$ the number of leaves contained in $\ct(i,j)$. 
%\item 

 For a set $S \subseteq V$, $w(S) \coloneqq \sum_{(i < j) \in S \times S} w_{ij}$ denotes the total weight of pairwise similarities inside $S$. For a pair of disjoint sets $(S,T) \in V \times V$, $w(S,T) \coloneqq \sum_{i \in S, j \in T} w_{ij}$ denotes the total weight of pairwise similarities between $S$ and $T$.

\begin{definition}[HC Objective~\cite{joshNIPS}]
\label{def:MW}
For a tree $\ct$ consider the hierarchical clustering objective:
\begin{equation}\label{eq:obj}
\mw{\ct} = \sum_{1 \le i < j \le n} w_{ij} (n -|\ct(i,j)|)\tag{*}
  \end{equation}
\end{definition}
If we denote the total weight of the graph by $W\coloneqq \sum_{i < j}w_{ij}$, then note that a trivial upper bound on the value of this objective is $(n-2)W$.

%\item 
\textsc{Average-Linkage}: One of the main algorithms used in practice for HC, it starts by merging clusters of data points that have the highest average similarity. It is known that it achieves $1/3$ approximation for the Moseley-Wang objective and this is tight in the worst case\footnote{As is common in worst-case analysis of algorithms, the $1/3$ tightness for Average-Linkage is based on a rather brittle pathological counterexample, even though its performance is much better in practice (see for example~\cite{charikar2018hierarchical}).}~\cite{charikar2019hierarchical}. For a formal description, please refer to \hyperref[algo:average]{Algorithm}~\ref{algo:average}.
%\item 

\textsc{Max-Uncut Bisection}: This is the complement problem to \textsc{Min-Cut Bisection} (which is perhaps more standard in the literature), and the goal here is to split the vertices of a weighted graph into two sets $(S,\bar{S})$, such that the weight of uncut edges $\sum_{ij\in E}w_{ij} - \sum_{i\in S,j\in \bar{S}}w_{ij}$ is maximized. It is known that  one can achieve at least $.8776$ of the optimum value in polynomial time~\cite{austrin,wu2015improved}.

%\end{itemize}

\begin{algorithm}[h]
\caption{\textsc{Average-Linkage}}
\label{algo:average}
\begin{algorithmic}[1]
\State \textbf{input:} Similarity matrix $w \in \mathbb R_{\ge 0}^{n \times n}$.
\State Initialize clusters $\mathcal{C}  \leftarrow \cup_{v\in V}\{v\} $.
\While {$|\mathcal{C}|\ge2$}
\State  Pick $A,B \in \mathcal{C}$ to maximize: \\ \qquad \qquad $w(A,B) \coloneqq \tfrac{1}{|A||B|}\sum_{a\in A,b\in B}w_{ab}$
\State Set $\mathcal{C}\leftarrow \mathcal{C}\cup\{A\cup B\}\setminus\{A,B\}$

\EndWhile
\end{algorithmic}
\end{algorithm}
\begin{algorithm}[h]
\caption{HC via \textsc{Max-Uncut Bisection}}
\label{algo:uncut}
\begin{algorithmic}[1]
\State \textbf{input:} Similarity matrix $w \in \mathbb R_{\ge 0}^{n \times n}$.
\State Partition the underlying graph on $n$ vertices with edges weighted by $w$ into two parts $S$ and $\bar{S}$ using \textsc{Max-Uncut Bisection} as a black box. This creates the top split in the hierarchical clustering tree.
\State Run \textsc{Average-Linkage} on $S$ and on $\bar{S}$ to get trees $\mathcal{T_{S}}$ and $\mathcal{T_{\bar{S}}}$.
\State Construct the resulting HC tree by first splitting into  $(S,\bar{S})$, then building trees $\mathcal{T_{S}}$ and $\mathcal{T_{\bar{S}}}$ on the respective sets.
\end{algorithmic}
\end{algorithm}

%{\color{red}Below we work with the normalized version of this objective $\mw{\ct} / |V|$ which is equivalent from the approximation point of view.}

%\input{396.tex}

\section{0.4246 approximation for HC}
\label{sec:improved}
Our algorithm, based on \textsc{Max-Uncut Bisection} and \textsc{Average-Linkage}, is simple to state and is given in \hyperref[algo:uncut]{Algorithm}~\ref{algo:uncut}. It starts by finding an approximate solution to \textsc{Max-Uncut Bisection}, followed by \textsc{Average-Linkage} agglomerative hierarchical clustering in each of the two pieces\footnote{With no change in approximation one can also run \textsc{Max-Uncut Bisection} recursively, however running \textsc{Average-Linkage} is typically substantially more efficient.}. Our main result is that this approach produces a binary tree which achieves $.4246$ of the optimum:

\begin{theorem}\label{thm:main}
Given an instance of hierarchical clustering, our \hyperref[algo:uncut]{Algorithm}~\ref{algo:uncut} outputs a tree achieving $\frac{4\rho}{3 (2\rho + 1)}-o(1) \ge .4246$ (for $\rho = 0.8776$) of the optimum according to the objective (\ref{eq:obj}), if a $\rho$-approximation for the \textsc{Max-Uncut Bisection} problem is used as a black-box.
\end{theorem}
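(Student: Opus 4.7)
The plan is to sandwich ALG between two complementary lower bounds and pair these with an upper bound on OPT. Let $(S, \bar S)$ denote the balanced partition returned by the Max-Uncut Bisection subroutine, with uncut weight $U := w(S) + w(\bar S) \geq \rho U^*$, where $U^*$ is the value of the optimal bisection. First, I decompose the MW value of the output tree along this top split: every pair $(i,j)$ that is uncut has $|\ct(i,j)| \leq n/2$ and therefore contributes at least $(n/2)\, w_{ij}$ to the objective, while the Average-Linkage subroutines contribute $\text{AL}(S)$ and $\text{AL}(\bar S)$ on the two subinstances. Invoking that Average-Linkage achieves at least a $\frac{1}{3}$-fraction of the trivial $(n'-2)\, W'$ bound on any $n'$-vertex subinstance (a standard consequence of the Moseley-Wang analysis), I obtain
\[
\text{ALG} \ \geq\ \tfrac{n}{2}\, U \ +\ \tfrac{n/2 - 2}{3}\, U \ =\ \tfrac{2n-2}{3}\, U \ \geq\ \tfrac{2n-2}{3}\,\rho\, U^*.
\]

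For the OPT side I will derive a structural inequality valid for any bisection $(S, \bar S)$. Using the identity $|\ct^*(i,j)| = |\ct^*\!\restriction_S(i,j)| + |\ct^*\!\restriction_{\bar S}(i,j)|$ for same-side pairs, and that each restriction is at most $n/2$, together with the trivial $n - |\ct^*(i,j)| \leq n-2$ on cross pairs, yields
\[
\text{OPT}(V) \ \leq\ \text{OPT}(S) + \text{OPT}(\bar S) + \tfrac{n}{2}\, U + (n-2)\, w(S, \bar S).
\]
Plugging in $\text{OPT}(S) \leq 3\,\text{AL}(S)$ (the $\tfrac{1}{3}$-approximation run in reverse) and substituting $\text{AL}(S) + \text{AL}(\bar S) = \text{ALG} - (n/2) U$, this rearranges to a second lower bound
\[
\text{ALG} \ \geq\ \tfrac{1}{3}\bigl[\text{OPT}(V) + (2n-2) U - (n-2) W\bigr].
\]

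To close the argument I will take an appropriate weighted combination of these two lower bounds on ALG, applying $U \geq \rho U^*$, and show that the two bounds cross precisely at $U^* = \tfrac{2W}{2\rho+1}$ to yield the target ratio $\tfrac{4\rho}{3(2\rho+1)}$. Above this threshold the first (``top-split'') bound dominates against the trivial $\text{OPT} \leq (n-2) W$, and below it the structural bound combined with $\text{OPT} \leq (n-2) U^*$, which holds whenever the optimal tree's top split is balanced, suffices. The main obstacle is the case where $\ct^*$'s top split $(A^*, \bar A^*)$ is very unbalanced: then $(A^*, \bar A^*)$ is not itself a bisection, so $w(A^*) + w(\bar A^*)$ is not directly controlled by $U^*$ and the naive bound $\text{OPT} \leq (n-2)(w(A^*) + w(\bar A^*))$ can exceed $(n-2) U^*$. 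I expect to handle this via a randomized rounding that completes $(A^*, \bar A^*)$ into a balanced bisection by moving $n/2 - |A^*|$ random vertices across; since $U^*$ upper bounds the expected uncut weight of the rounded bisection, this yields the required inequality $\text{OPT} \lesssim (n-2) U^*$ up to lower-order terms, which is precisely the source of the $o(1)$ slack in the theorem statement.
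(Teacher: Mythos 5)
Your first lower bound on ALG is exactly the paper's Proposition 3.7: each uncut pair contributes at least $n/2$ and the Average-Linkage guarantee $\mathrm{AL}(S)\ge \tfrac13(\tfrac n2-2)w(S)$ adds the rest, giving $\mathrm{ALG}\ge \tfrac{2(n-1)}{3}U\ge \tfrac{2(n-1)}{3}\rho U^*$. Your ``structural'' second bound is also a valid inequality. The problem is the step you yourself flag as the main obstacle, which is precisely where all of the paper's technical work lives, and your proposed fix does not work. Rounding the \emph{top} split $(A^*,\bar A^*)$ of the optimal tree into a bisection by moving $n/2-|A^*|$ random vertices across certifies far too little: when the top split is very unbalanced (e.g.\ $|A^*|=1$), your rounding degenerates into an essentially uniform random bisection and certifies only $U^*\gtrsim W/2$, so the resulting bound is $\mathrm{OPT}\lesssim 2nU^*$ and your first inequality yields only a $\rho/3<1/3$ ratio. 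Even in mild cases (three disjoint cliques of size $n/3$) the expected uncut weight of your rounded top split is $\tfrac{n^2}{8}$ while $\mathrm{OPT}\approx \tfrac{7n^3}{54}$, so the certified relation $\mathrm{OPT}\le nU^*$ already fails. Working backwards from your two bounds, what you actually need in the regime $U^*<\tfrac{2W}{2\rho+1}$ is $\mathrm{OPT}\le \tfrac{2\rho+1}{2}\,nU^*\approx 1.378\,nU^*$, and this is exactly the borderline inequality the paper obtains only by combining two ingredients you are missing.

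The paper's route is: descend the optimal tree to the highest node $\stopt$ with more than $n/2$ leaves whose two children each have at most $n/2$ leaves, giving a three-set decomposition $(A,B,C)$ with $\alpha=w(A)+w(B)+w(C)$, $\beta$ the cross weight, and $\mathrm{OPT}\le (n-2)\alpha+|C|\beta$ (Proposition 3.2). When $\alpha\le\beta$, Average-Linkage alone gives $4/9$ (Proposition 3.3) --- this case is needed because no bisection-based bound can control the $|C|\beta$ term when $\beta$ dominates. When $\alpha\ge\beta$, Lemma 3.4 constructs a \emph{non-uniform} randomized bisection: it picks one of $A,B,C$ with carefully chosen probabilities $p_A,p_B,p_C$ (depending on $\tfrac12-|A|/n$, etc.) and splits only that set, so that the expected cut internal weight is exactly $\delta\alpha$ and the expected preserved cross weight is at least $\delta\beta$, with $\delta\le\delta_{\max}(c)=\tfrac{c(1-2c)}{1-3c^2}$. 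The final ratio comes from jointly optimizing the Average-Linkage bound and this bisection bound over $c$ and $\beta/\alpha$ (Lemma 3.9). Your proposal contains neither the descent to $\stopt$ nor the weighted three-way randomized bisection, and the naive top-split rounding cannot substitute for them; additionally, your closing case analysis does not verify (and, as computed above, does not satisfy) the threshold arithmetic it asserts.
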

\noindent{\bf Remark:} The current best approximation factor achievable for \textsc{Max-Uncut Bisection} in polynomial time is $\rho=0.8776$. This makes our analysis almost tight, since one can't get better than $.444$ even by using an exact \textsc{Max-Uncut Bisection} algorithm (with $\rho = 1$).

%\noindent{\bf Proof Sketch:} 

\subsection{Overview of the proof}
Before delving into the technical details of the main proof, we present our high-level strategy through a series of 4 main steps:
%\begin{itemize}
%\item \noindent{\bf Step 1:} 
\paragraph{Step 1:} Consider a binary\footnote{W.l.o.g. the optimal tree can be made binary.} tree $\topt$ corresponding to the optimal solution for the hierarchical clustering problem and let $\texttt{OPT} = \mw{\topt}$ be the value of the objective function for this tree. Note that there exists a subtree $\stopt$ in this tree which contains more than $n/2$ leaves while its two children contain at most $n/2$ leaves each (see \hyperref[fig:tree]{Figure}~\ref{fig:tree}). Given this decomposition of the optimum tree into three size restricted sets $A,B,C$, we provide an upper bound for $\texttt{OPT}$ as a function of the weight inside and across these sets (see \hyperref[prop:opt-ub]{Proposition}~\ref{prop:opt-ub}). We then need to do a case analysis based on whether the weight across or inside these sets is larger.

%\item \noindent{\bf Step 2:} 
\paragraph{Step 2:}
In the former case, things are easy as one can show that $\texttt{OPT}$ is \textit{small} and that even the contribution from the \textsc{Average-Linkage} part of our algorithm alone yields a $\frac49$-approximation. This is carried out in \hyperref[cor:beta>alpha]{Proposition}~\ref{cor:beta>alpha} based on the \hyperref[prop:rand]{Fact}~\ref{prop:rand}.

%\item \noindent{\bf Step 3:}  
\paragraph{Step 3:}
In the latter case, we show that there exists a split of the graph into two exactly equal pieces, so that the weight of the uncut edges is relatively \textit{large}. This is crucial in the analysis as having a good solution to the \textsc{Max-Uncut Bisection} directly translates into a high value returned by the $\rho$-approximate black box algorithm (see \hyperref[lem:uncut-lb]{Lemma}~\ref{lem:uncut-lb}, \hyperref[prop:uncut-red]{Proposition}~\ref{prop:uncut-red} and~\hyperref[prop:uncut-blue]{Proposition}~\ref{prop:uncut-blue}).

%\item \noindent{\bf Step 4:} 
\paragraph{Step 4:}
Finally, from the previous step we know that the returned value of the black box is large, hence taking into account the form of the HC objective, we can derive a lower bound for the value our \hyperref[algo:uncut]{Algorithm}~\ref{algo:uncut}. The proof of the main theorem is then completed by \hyperref[prop:sdp-bound]{Proposition}~\ref{prop:sdp-bound} and \hyperref[lem:final]{Lemma}~\ref{lem:final}.
%\end{itemize}

\begin{figure}
\center
  \includegraphics[width=10cm]{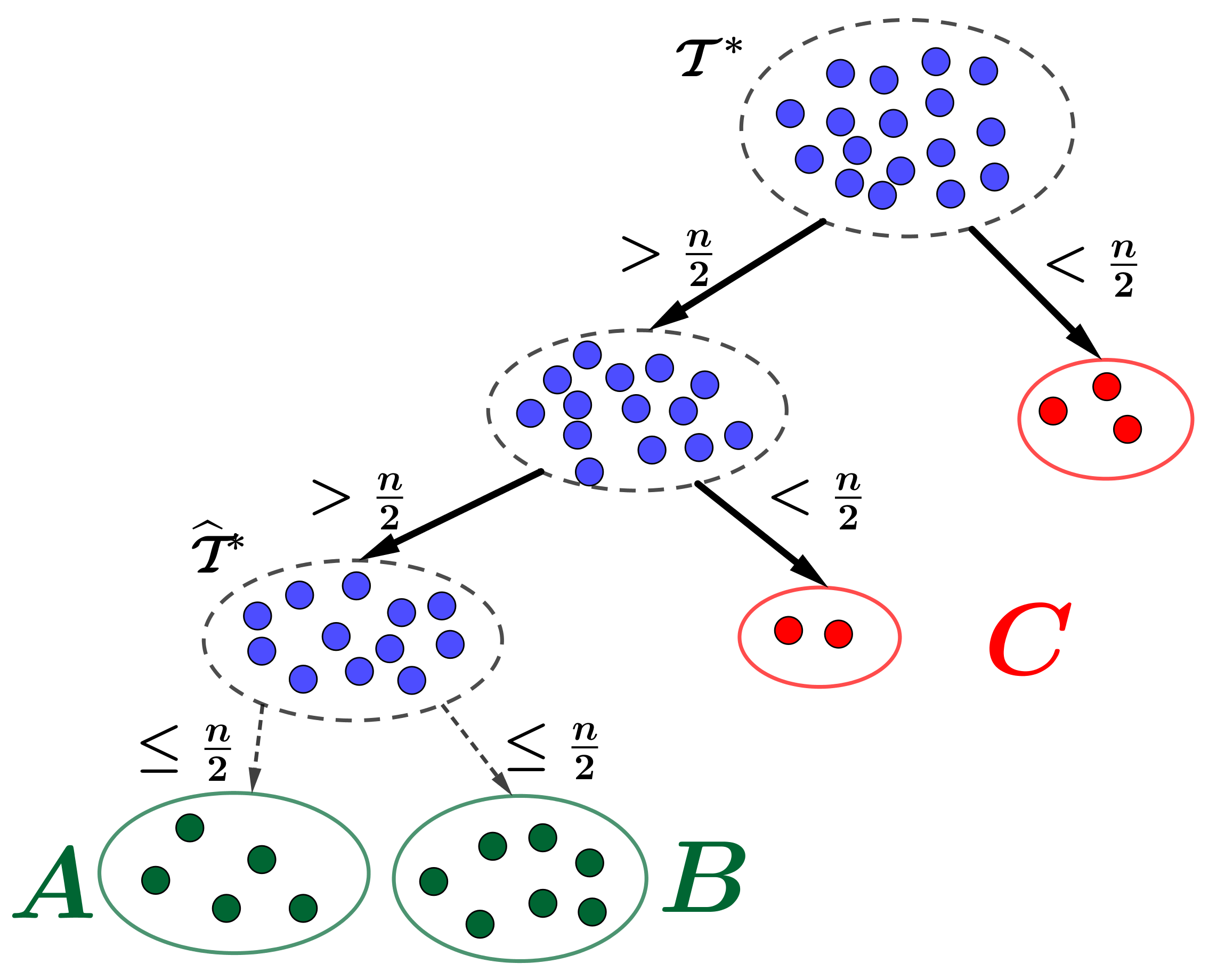}
  \caption{Splitting $\topt$ to size restricted sets $A,B,C$.}
  \label{fig:tree}
\end{figure}

\subsection{Proof of Theorem~\ref{thm:main}}

For ease of presentation, we assume $n$ is even, to avoid the floor/ceiling notation and we omit the $o(1)$ terms.

\begin{proposition}\label{prop:opt-ub}
Let $A$ be the set of leaves in the left subtree of $\stopt$, let $B$ the set of  leaves in the right subtree of $\stopt$ and $C = V \setminus (A \cup B)$ be the set of leaves outside of $\stopt$. Then\footnote{By definition, $A,B$ contain at most $\tfrac{n}{2}$ leaves, while $C$ contains strictly fewer than $\tfrac{n}{2}$ leaves.}: 
\[\texttt{OPT} \leq \left(w(A) + w(B) + w(C)\right)\cdot(n-2) +\] 
\[+\left(w(A,B) + w(B,C) + w(A,C)\right)\cdot|C|\]
\end{proposition}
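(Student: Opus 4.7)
The plan is to bound the contribution of each pair $(i,j)$ to $\mw{\topt} = \sum_{i<j} w_{ij}(n-|\topt(i,j)|)$ according to which of the three sets $A$, $B$, $C$ its endpoints lie in, and then aggregate. There are six cases: the three within-set cases (both endpoints in $A$, both in $B$, both in $C$) and the three across-set cases ($A$–$B$, $A$–$C$, $B$–$C$).

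For the within-set pairs I would use only the trivial bound $|\topt(i,j)| \geq 2$, so that each such pair contributes at most $(n-2)w_{ij}$. Summing over all within-set pairs gives the first term $(w(A) + w(B) + w(C))(n-2)$ of the claimed upper bound.

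The real work is in the across-set cases, and here I would exploit the definition of $\stopt$. By construction, $\stopt$ has more than $n/2$ leaves while each of its two children has at most $n/2$ leaves, and its leaf set is exactly $A \cup B$, so $|A|+|B| = n-|C|$. For any $i \in A$ and $j \in B$, the least common ancestor in $\topt$ is exactly the root of $\stopt$, which contains $|A|+|B|=n-|C|$ leaves; hence $n - |\topt(i,j)| = |C|$. For a pair with one endpoint in $C$ and the other in $A \cup B$, the LCA lies on the root-to-$\stopt$ path strictly above $\stopt$, so the subtree at this LCA contains all of $A \cup B$ plus at least one leaf from $C$; thus $|\topt(i,j)| \geq n-|C|+1$ and in particular $n-|\topt(i,j)| \leq |C|$. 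Applying these bounds pair by pair yields exactly $(w(A,B)+w(A,C)+w(B,C))\cdot|C|$ as the total across-set contribution, and summing with the within-set bound completes the proof.

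I do not anticipate a real obstacle here; the statement is essentially a structural observation, and the only subtlety is checking the LCA argument for $A$–$C$ and $B$–$C$ pairs, i.e.\ making sure one uses the fact that $\stopt$ is a proper ancestor-subtree of the LCA and therefore its entire leaf set $A\cup B$ lies under that LCA. Once that is in place, the bound drops out by summing the per-pair estimates.
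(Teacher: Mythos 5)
Your proposal is correct and follows essentially the same argument as the paper: the trivial $(n-2)$ bound for within-set pairs, the observation that the LCA of an $A$--$B$ pair is the root of $\stopt$ (giving exactly $|C|$), and the observation that the LCA of a $C$--$(A\cup B)$ pair is a strict ancestor of $\stopt$ (giving at most $|C|$). Your version is marginally sharper in the last case (you note the bound is really $|C|-1$), but the decomposition and reasoning are identical.
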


% {\color{red}Let $a = |A|/ |V|$, $b = |B| / |V|$ and $c = |C|/|V|$ be the fractions of leaves in each of these subrees.}    

%For a set $S \subseteq V$ we use notation $w(S) = \sum_{(i < j) \in S \times S} w_{ij}$ for the total weight of edges inside $S$.
%For a pair of disjoint sets $(S,T) \in V \times V$ we use notation $w(S,T) = \sum_{i \in S, j \in T} w_{ij}$ for the total weight of edges between $S$ and $T$.

\begin{proof} For an edge $(i,j)$ whose endpoints lie in the same cluster (i.e., $A$, $B$ or $C$), its contribution to the objective is at most $w_{ij}(n-2)$, using the trivial upper bound of $(n-2)$. Consider any pair of leaves $(i,j) \in A \times B$ in $\topt$. The least common ancestor for this pair is the root of $\stopt$  and hence the contribution of this pair to the objective is equal to $w_{ij} (n - |\stopt|) = w_{ij}|C|$. Similarly, for any pair of leaves $(i,j) \in A \times C$ (or in $B \times C$), their least common ancestor is a predecessor of the root of $\stopt$ and hence the contribution of this pair to the objective is at most $w_{ij} (n - |\stopt|) = w_{ij}|C|$. The desired bound now follows by summing up all the contributions of all distinct pairs of leaves.
\end{proof}

From now on, let $\alpha \coloneqq w(A) + w(B) + w(C)$ and let $\beta \coloneqq w(A,B) + w(B,C) + w(A,C)$ denote the total weights of similarities inside the three sets and crossing a pair of these sets respectively. Note the total weight of all similarities is $W=\alpha+\beta$. 

\begin{fact}[\textsc{Average-Linkage}~\cite{joshNIPS}]\label{prop:rand}
%The random assignment algorithm \randalg gives a solution whose expected HC objective is
%$\frac13W(n-2)=\frac13(\alpha+\beta)(n-2)$ and 
The \textsc{Average-Linkage} algorithm gives a solution whose $\mathcal F^+$ objective is at least $\frac13W(n-2)=\frac13(\alpha+\beta)(n-2)$.
\end{fact}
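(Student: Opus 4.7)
The statement is the classical $\tfrac{1}{3}$-approximation bound of Moseley and Wang against the trivial upper bound $\mw{\topt}\le(n-2)W$. My plan is to reproduce their argument, starting with a decomposition of the objective into per-merge contributions. For any binary tree $\ct$ on $n$ leaves, one can write
\[\mw{\ct}\;=\;\sum_{v}\, w(A_v,B_v)\,\bigl(n-|A_v|-|B_v|\bigr),\]
where the sum ranges over internal nodes $v$ of $\ct$, $A_v,B_v\subseteq V$ are the leaf sets of $v$'s two children, and $w(A_v,B_v)=\sum_{i\in A_v,\,j\in B_v}w_{ij}$ is the total cross-similarity. This holds because each leaf pair $(i,j)$ has a unique LCA $v$, at which $|\ct(i,j)|=|A_v|+|B_v|$. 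Applied to the tree produced by Average-Linkage, the internal nodes correspond in reverse order to the algorithm's $n-1$ merges.

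The next step is to invoke the greedy merging rule. When Average-Linkage merges clusters $A$ and $B$, every other currently active cluster $C$ must satisfy $\frac{w(A,B)}{|A||B|}\ge\frac{w(A,C)}{|A||C|}$ and the symmetric inequality with $B$ in place of $A$. Cross-multiplying yields
\[w(A,B)\,|C|\;\ge\;w(A,C)\,|B| \qquad\text{and}\qquad w(A,B)\,|C|\;\ge\;w(B,C)\,|A|.\]
Summing these inequalities across all active triples $(A,B,C)$ at every merge and comparing with the decomposition above eventually telescopes: each edge weight $w_{ij}$ is charged exactly once on the left-hand side at the (unique) merge that first joins its endpoints, and a controlled number of times on the right-hand side as a within-cluster contribution of earlier merges. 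Carrying out this comparison yields $\mw{\ct_{\textsc{AL}}}\ge \tfrac13(n-2)W$, and since $W=\alpha+\beta$ by the notation introduced just above this fact, the claim follows.

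The main obstacle is the combinatorial bookkeeping in the telescoping step: one must carefully track, for each edge $\{i,j\}$, the sequence of merges that occur while $i$ and $j$ are still in distinct clusters, and verify that the count of right-hand-side occurrences of $w_{ij}$ (weighted by the correct sizes $|A|,|B|,|C|$) matches the left-hand side with exactly the factor $3$ that yields the constant $\tfrac13$. Everything else is mechanical arithmetic; alternatively, since this fact coincides with Moseley and Wang's original analysis of Average-Linkage, one can simply invoke their result from~\cite{joshNIPS}.
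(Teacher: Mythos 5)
The paper offers no proof of this Fact at all --- it is stated with a citation to Moseley--Wang --- so there is no in-paper argument to compare against; your sketch is exactly the argument from the cited reference and its two ingredients (the per-merge decomposition of $\mw{\ct}$ and the greedy inequalities $w(A,B)|C|\ge w(A,C)|B|$, $w(A,B)|C|\ge w(B,C)|A|$) are correct. The bookkeeping you flag as the main obstacle closes in one line and involves no delicate counting: averaging the two inequalities gives $w(A,B)|C|\ge\tfrac12\bigl(w(A,C)|B|+w(B,C)|A|\bigr)$, and summing over every merge and every third active cluster $C$ makes the left side equal to $\mw{\ct}$ while the right side equals $\tfrac12\sum_{i<j}w_{ij}\bigl(|\ct(i,j)|-2\bigr)=\tfrac12\bigl((n-2)W-\mw{\ct}\bigr)$, since each edge $\{i,j\}$ is ``damaged'' by exactly $|\ct(i,j)|-2$ vertices across the merges preceding its LCA merge; rearranging yields $\mw{\ct}\ge\tfrac13(n-2)W$.
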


%\begin{proof}
%Let $\ct_r$ be a random binary tree. We have:
%$$\E{\mw{\ct_r}} = \mathbb E{\sum_{(i, j) \in E} w_{ij} \left(1 -\frac{|\{x \in  \ct(i,j)\}|}{n}\right)} =$$
%$$=\sum_{(i,j) \in E} w_{ij} \left(1 - \frac{\E{|\{x \in \ct_r(i,j)\}|}}{n}\right).$$
%Note that $\E{|\{x \in \ct_r(i,j)\}|} = 2 + \frac{2(n - 2)}{3}$. This is because $\Pr[i \in \ct_r(i,j)] = \Pr[j \in \ct_{r}(i,j)] = 1$  while for $x \in V \setminus \{i,j\}$ we have $\Pr[x \in \ct_r(i,j)] = 2/3$. Putting things together we have $$\E{\mw{\ct_r}} = \sum_{(i,j) \in E} w_{ij} \frac13 \left(1 - \frac2n\right) =$$
%$$=\left(\frac13 - O\left(\frac1n\right)\right)(\alpha + \beta).$$
%\end{proof}

\begin{proposition}\label{cor:beta>alpha}
If $\alpha \le \beta$, our \hyperref[algo:uncut]{Algorithm}~\ref{algo:uncut} outputs a solution of 
value at least $4/9 \texttt{OPT} \geq 0.44 \texttt{OPT}$, where \texttt{OPT} denotes the HC value of any optimum solution.
\end{proposition}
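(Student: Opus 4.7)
The plan is to combine the upper bound on $\texttt{OPT}$ from Proposition~\ref{prop:opt-ub} with the Moseley--Wang Average-Linkage guarantee of Fact~\ref{prop:rand}. First I would plug $|C|\leq n/2$ into Proposition~\ref{prop:opt-ub} to obtain, up to $o(1)$ corrections in the ratio,
$$\texttt{OPT} \;\leq\; \alpha(n-2) + \beta|C| \;\leq\; \left(\alpha + \tfrac{\beta}{2}\right)(n-2).$$
Using the case hypothesis $\alpha \leq \beta$, a short manipulation yields $\alpha + \beta/2 \leq \tfrac{3}{4}(\alpha+\beta) = \tfrac{3}{4}W$ (where $W=\alpha+\beta$), so $\texttt{OPT} \leq \tfrac{3}{4}W(n-2)$.

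Next I would lower bound Algorithm~\ref{algo:uncut}'s value by isolating its Average-Linkage contribution. For any pair $(i,j)$ inside the same half $S$, we have $|\ct(i,j)|=|\mathcal T_S(i,j)|$, so the contribution $w_{ij}(n-|\ct(i,j)|)$ splits as $w_{ij}(n-|S|) + w_{ij}(|S|-|\mathcal T_S(i,j)|)$. Summing over pairs in $S$ and applying Fact~\ref{prop:rand} to the induced subgraph on $S$ yields a bound of at least $(n-|S|)w(S) + \tfrac{1}{3}w(S)(|S|-2)$, and symmetrically for $\bar S$. With $|S|=|\bar S|=n/2$ and suitable control over $w(S)+w(\bar S)$, the two halves combine to at least $\tfrac{1}{3}W(n-2)$, and dividing by the upper bound gives the target ratio $\mw{\ct}/\texttt{OPT}\geq (1/3)/(3/4) = 4/9$.

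The main obstacle is the ``suitable control'' step: the combined contribution above is $\approx \tfrac{2}{3}n\,(w(S)+w(\bar S))$, so reaching $\tfrac{1}{3}W(n-2)$ requires $w(S)+w(\bar S) \gtrsim W/2$. This can be argued from the trivial lower bound $\texttt{OPT-MUB}\geq W/2$ (the expected uncut weight of a uniformly random bisection is $W/2$), so a $\rho$-approximation returns at least $\rho \cdot W/2$; alternatively and most cleanly, one observes that Fact~\ref{prop:rand} applied directly to all of $V$ gives $\tfrac{1}{3}W(n-2)$ for plain Average-Linkage, and so the algorithm can simply report the better of its Max-Uncut-Bisection-plus-Average-Linkage tree and the plain Average-Linkage tree, inheriting the $4/9$ bound with no reference to $\rho$.
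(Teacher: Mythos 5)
Your argument is essentially the paper's: the same upper bound $\texttt{OPT} \le \alpha(n-2) + \tfrac{n-2}{2}\beta$ from Proposition~\ref{prop:opt-ub} combined with the $\tfrac13 W(n-2)$ guarantee of Fact~\ref{prop:rand}, and your inequality $\alpha+\tfrac{\beta}{2} \le \tfrac34(\alpha+\beta)$ under $\alpha\le\beta$ is just a repackaging of the paper's identity $\tfrac13(\alpha+\beta)-\tfrac49\bigl(\alpha+\tfrac{\beta}{2}\bigr)=\tfrac19(\beta-\alpha)\ge 0$. The one place you go beyond the paper is in asking why Algorithm~\ref{algo:uncut}'s output is worth at least $\tfrac13 W(n-2)$: the paper simply asserts that ``the Average-Linkage part of the algorithm alone'' achieves this, which, as you correctly observe, is not literally true of the tree the algorithm builds, since the per-half decomposition only yields $\tfrac23(n-1)\bigl(w(S)+w(\bar S)\bigr)$. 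Your second patch --- output the better of the bisection-based tree and a plain Average-Linkage tree --- is the right fix and matches the paper's evident intent, since the proof of Lemma~\ref{lem:final} likewise falls back on the plain Average-Linkage value. Your first patch, however, does not close the gap: a $\rho$-approximate \textsc{Max-Uncut Bisection} only guarantees $w(S)+w(\bar S)\ge \rho\cdot\tfrac{n-2}{2(n-1)}W$, so the resulting bound is $\tfrac{\rho}{3}W(n-2)$, a factor $\rho$ short of the $\tfrac13 W(n-2)$ you need; only the best-of-two version of the argument is airtight.
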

\begin{proof}
Recall that by definition of $C$ it holds that $|C| < \frac n2\le\frac n2 -1\le\frac {n-2}{2}$. Hence by \hyperref[prop:opt-ub]{Proposition}~\ref{prop:opt-ub} we have $\texttt{OPT} \le \alpha(n-2) + |C|\cdot \beta \le \alpha(n-2) + \frac{n-2}{2} \beta$.
On the other hand, by \hyperref[prop:rand]{Fact}~\ref{prop:rand}, %\texttt{RAND} 
Average-Linkage outputs a solution whose expected value is $\frac13 (\alpha + \beta)(n-2)$. We have $\frac13 (\alpha + \beta)(n-2) - \frac49 (\alpha(n-2) + \frac {n-2}{2} \beta) = \frac19 (\beta - \alpha) \ge 0$.
Hence, the %\texttt{RAND} 
Average-Linkage part of the algorithm alone gives a $\frac49$-approximation in this case.
\end{proof}

\begin{lemma}\label{lem:uncut-lb}
Suppose $\alpha \geq \beta$. Then, there exists a balanced cut $(L,R)$ of the nodes in $G$, such that the weight of the uncut edges is at least
$\alpha - (\alpha-\beta)\delta_{max}(c)$, where $c=|C|/n$ and $\delta_{max}(c) = \tfrac{c(1-2c)}{(1-3c^2)}$.
\end{lemma}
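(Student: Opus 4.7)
The plan is to exhibit three concrete balanced bisections that exploit the structure of $A$, $B$, $C$ and to argue that the best of the three attains the claimed uncut bound. Each candidate places two of the sets on opposite sides of the cut in their entirety and splits the third to restore $(n/2,n/2)$ balance. The ``split-$C$'' candidate $T_C$ puts all of $A$ on the left and all of $B$ on the right, together with a uniformly random $p=(1/2-a)/c$ fraction of $C$ on the left, giving expected uncut $U_C=\alpha-2p(1-p)w(C)+p\,w(A,C)+(1-p)w(B,C)$. The ``split-$A$'' candidate $T_A$ forms the cut $(A_1\cup C,\,A_2\cup B)$ where $A_1\subset A$ is a random subset of size $n/2-|C|$; with $\alpha_1=(1/2-c)/a$ and $\alpha_2=1-\alpha_1$, we get $U_A=\alpha-2\alpha_1\alpha_2\,w(A)+\alpha_2 w(A,B)+\alpha_1 w(A,C)$. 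The symmetric ``split-$B$'' candidate $T_B$ yields $U_B=\alpha-2\beta_1\beta_2\,w(B)+\beta_2 w(A,B)+\beta_1 w(B,C)$, with $\beta_1=(1/2-c)/b$. Feasibility of all three cuts follows from $a,b\le 1/2$, and standard averaging produces an actual balanced bisection achieving each expected uncut up to an additive $o(W)$.

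Next, I would choose nonnegative multipliers $\lambda_C=\delta_{max}/(2p(1-p))$, $\lambda_A=\delta_{max}/(2\alpha_1\alpha_2)$, $\lambda_B=\delta_{max}/(2\beta_1\beta_2)$, engineered so that in the linear combination $\sum_T\lambda_T U_T$ the coefficient of each within-set weight $w(X)$ is exactly $-\delta_{max}$. A short case check shows that the coefficient of every cross-set weight $w(X,Y)$ in this same combination is at least $\delta_{max}$: each instance reduces to the elementary bound $\tfrac1x+\tfrac1y\ge 2$ for $x,y\in(0,1]$. Consequently $\sum_T\lambda_T U_T\ge S\alpha-\delta_{max}(\alpha-\beta)$, where $S:=\lambda_A+\lambda_B+\lambda_C$. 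Because $\max_T U_T\ge S^{-1}\sum_T\lambda_T U_T$ and $\alpha\ge\beta$, the entire lemma reduces to establishing $S\ge 1$.

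Proving $S\ge 1$ is the main technical obstacle, and where the specific form $\delta_{max}(c)=c(1-2c)/(1-3c^2)$ gets pinned down. Under the change of variables $s=1/2-c$, $v=a+c-1/2$, $w=b+c-1/2$ (so $v+w=c$), the claim becomes the purely algebraic inequality $\tfrac{(s+v)^2}{sv}+\tfrac{(s+w)^2}{sw}+\tfrac{(v+w)^2}{vw}\ \ge\ \tfrac{2(1-3c^2)}{c(1-2c)}$. I would handle this by symmetrizing: setting $v=\tfrac{c}{2}(1+\epsilon)$, $w=\tfrac{c}{2}(1-\epsilon)$ reduces the left-hand side to $8+\tfrac{4s}{c(1-\epsilon^2)}+\tfrac{c}{s}+\tfrac{4\epsilon^2}{1-\epsilon^2}$. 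At $\epsilon=0$ (the symmetric instance $|A|=|B|$) the two sides coincide after using $s=(1-2c)/2$, so $\delta_{max}(c)$ is precisely the value that makes the symmetric case tight; for $\epsilon\ne 0$ the excess is a strictly positive multiple of $\epsilon^2/(1-\epsilon^2)$. This gives $S\ge 1$ with equality exactly when $a=b=(1-c)/2$, completing the proof modulo the $o(1)$ rounding of fractional set-sizes to an integer bisection, which the statement already absorbs.
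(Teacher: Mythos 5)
Your proposal is correct and is essentially the paper's own argument in derandomized form: your three candidate bisections are exactly the three cuts the paper mixes over randomly, your multipliers $\lambda_A,\lambda_B,\lambda_C$ are proportional to the paper's selection probabilities $p_A,p_B,p_C$, and your key inequality $S\ge 1$ is precisely the paper's bound $\delta\le\delta_{max}(c)$, proved by the same symmetrization in $a,b$ (the paper phrases it as convexity of $t\mapsto 1/t$ applied to $\tfrac12(1/\compl a+1/\compl b)\ge 2/(\compl a+\compl b)$). The cross-edge coefficient bounds via $\tfrac1x+\tfrac1y\ge 2$ likewise correspond to the paper's estimate $\Pr[(u,v)\notin(L,R)]\ge\delta$, so the two proofs differ only in presentation.
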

\begin{proof}
For the partition $(A, B, C)$ we will refer to edges whose both endpoints are inside one of the three sets as \textit{red} edges (i.e., $(i,j) \in (A \times A) \cup (B \times B) \cup (C \times C)$).
We refer to the edges whose two endpoints are contained in two different sets as \textit{blue} edges (i.e., $(i,j) \in (A \times B) \cup (A \times C) \cup (B \times C)$). Our goal here is to give a randomized partitioning scheme that produces the bisection $(L,R)$ with high value of uncut weight lying inside $L,R$.

For simplicity, recall that $n$ is even. The case of odd $n$ is handled similarly.
Denote $a = |A|/n$ and $b=|B|/n$. Let $\compl a = 1/2-a$, $\compl b = 1/2-b$, and $\compl c = 1/2-c$. Note that $\compl a$, $\compl b$ are non-negative, and $\compl c$
is strictly positive due to the size restrictions. Define:
$$
q_A = \frac{2\compl b \compl c}{(\compl b + \compl c)^2},\;\;
q_B = \frac{2\compl a \compl c}{(\compl a + \compl c)^2},\;\;
q_C = \frac{2\compl a \compl b}{(\compl a + \compl b)^2},$$
and
$$
p_A = \frac{q_B q_C}{q_A q_B + q_B q_C + q_A q_C},$$
$$
p_B = \frac{q_A q_C}{q_A q_B + q_B q_C + q_A q_C},$$
$$
p_C = \frac{q_A q_B}{q_A q_B + q_B q_C + q_A q_C}.
$$
We also denote the following expression by $\delta$:
$$\delta = \frac{q_A\, q_B\, q_C}{q_A q_B + q_B q_C +q_A q_C}.$$

Consider the following partitioning procedure:
\begin{itemize}
\item Pick one of the sets $A$, $B$, or $C$ with probability $p_A$, $p_B$, and $p_C$, respectively (note that
 $p_A + p_B + p_C = 1$).
\item If the chosen set is $A$, partition it into two random sets $S_B$ and $S_C$ of size
$\compl b |A|/(\compl b + \compl c)$ and $\compl c |A|/(\compl b + \compl c)$ and output the cut $L= B\cup S_B$, $R = C \cup S_C$.
\item Similarly, if the chosen set is $B$, we partition it into two random sets $S_A$ and $S_C$ of size
$\compl a |B|/(\compl a + \compl c)$ and $\compl c |B|/(\compl a + \compl c)$ and output the cut $L= C\cup S_C$, $R = A \cup S_A$.
\item If the chosen set is $C$, we partition it into two random sets $S_A$ and $S_B$ of size
$\compl a |C|/(\compl a + \compl b)$ and $\compl b |C|/(\compl a + \compl b)$ and output the cut $L= A\cup S_A$, $R = B \cup S_B$.
\end{itemize}
We first observe that each of the output sets $L$ and $R$ has $n/2$ vertices, i.e., $(L,R)$ is a bisection of the graph. If for instance, the algorithm picks
set $A$ at the first step, then the set $L$ contains $|B| + \compl b |A|/(\compl b + \compl c)$ vertices. We have
$$|L|= |B| + \frac{\compl b}{\compl b + \compl c} |A| = bn + \frac{1/2-  b}{1 - b - c} \cdot an=$$
$$= bn + \frac{1/2-b}{a} \cdot an = \frac{n}{2}.$$
The set $R$ is the complement to $L$, thus, it also contains $n/2$ vertices. The cases when the algorithm picks the set $B$ or $C$ are
identical.

We now compute the expected weight of red edges in the bisection $(L,R)$. 

\begin{proposition}\label{prop:uncut-red}
The expected weight of uncut red edges is $(1 - \delta) \alpha$.
\end{proposition}
\begin{proof}
Again, assume that the algorithm picks the set $A$ at the first step.
Then, the sets $B$ and $C$ are contained in the sets $L$ and $R$, respectively. Consequently, no edges in $B$ and $C$ are in the cut between
$L$ and $R$. Every edge in $A$ is cut with probability $2\compl b \compl c/(\compl b + \compl c)^2$. Thus, the weight
of red edges in the cut between $L$ and $R$ (denoted as $E^{red}(L,R)$) given that the algorithm picks set $A$ equals
$$\E{|E^{red}(L,R)|\given\text{ algorithm picks $A$ at first step}}=$$
$$=\frac{2\compl b \compl c}{(\compl b + \compl c)^2} w(A) = q_A w(A).$$
Similarly, if the algorithm picks the set $B$ or $C$, the expected sizes of the cuts equal
$q_B w(B)$ and $q_C w(C)$,
respectively. Hence, the expected weight of the red edges between $L$ and $R$ (when we do not condition on the first step of the algorithm) equals
$$
\E{|E^{red}(L,R)|}=p_Aq_Aw(A)+p_Bq_B w(B) + p_C\, q_Cw(C)$$
Observe that 
$$p_Aq_A = p_Bq_B=p_Cq_C = \frac{q_A\, q_B\, q_C}{q_A q_B+q_B q_C+q_A q_C} = \delta.$$ 

Then, the expected weight of red edges between $L$ and $R$ equals:
$$
\E{|E^{red}(L,R)|} = \delta (w(A)+w(B)+w(C))=\delta \alpha.
$$
Here, we used that $w(A)+w(B)+w(C) = \alpha$.
The expected weight of uncut red edges equals $(1-\delta)\alpha$.
\end{proof}

We now lower bound the weight of uncut blue edges.
\begin{proposition}\label{prop:uncut-blue}
The expected weight of uncut blue edges is at least $\delta \beta$.
\end{proposition}
\begin{proof}
We separately consider edges between sets $A$ and $B$, $B$ and $C$, $A$ and $C$. 
Consider an edge $(u,v)\in A\times B$. This edge is not in
the cut $(L,R)$ if both endpoints $u$ and $v$ belong to $L$ or both
endpoints belong to $R$. The former event -- $\{u,v\in L\}$ -- 
occurs if the set $B$ is chosen in the first step of the
algorithm and the set $S_A$ contains vertex $v$; the latter event -- $\{u,v\in R\}$ -- occurs if $A$ is chosen in the first step of
the algorithm and the set $S_B$ contains vertex $u$. The probability 
of the union of these events is\footnote{Note that $\compl c = 0$ cannot happen by definition of the sets $A,B,C$.}
$$
\Pr[(u,v)\notin (L,R)] = 
p_B \cdot \frac{\compl{a}}{\compl a+\compl c}
+p_A \cdot \frac{\compl b}{\compl b+\compl c}
=$$
$$=p_B q_B \cdot \frac{(\compl a + \compl c)}{2\compl c}
+p_A q_A \cdot \frac{(\compl b + \compl c)}{2\compl c}.
$$
Since $p_Aq_A=p_Bq_B=\delta$, we have
$$\Pr[(u,v)\notin (L,R)] = 
\delta \cdot \frac{\compl a+\compl b + 2\compl c}{2\compl c} = 
\delta \cdot \frac{1/2 + \compl c}{2\compl c} \geq \delta.
$$
The last inequality holds because $(1/2+\compl c)/\compl c\geq 2$ for
all $\compl c\in(0,1/2]$. The same bound holds for edges 
between sets $B$ and $C$ and sets $A$ and $C$. Therefore,
the expected weight of uncut blue edges is at least 
$\delta \beta$.
\end{proof}

By the above two propositions (\hyperref[prop:uncut-red]{Proposition}~\ref{prop:uncut-red} and \hyperref[prop:uncut-blue]{Proposition}~\ref{prop:uncut-blue})
the expected total weight of uncut edges is at least: 
$$
\E{w(L) + w(R)} \geq (1-\delta)\alpha + \delta \beta =
\alpha - (\alpha - \beta)\delta.
$$
Note that we are in the case with $\alpha -\beta \geq 0$. Thus to establish 
a lower bound on the expectation, we need to show an upper bound on 
$\delta$. Write

$$
\delta = \frac{q_A\, q_B\, q_C}{q_A q_B + q_B q_C + q_A q_C} 
       = \frac{1}{\frac{1}{q_A} + \frac{1}{q_B} + \frac{1}{q_C}}.
$$
After plugging in the values of $q_A$, $q_B$, and $q_C$, we obtain the following expression for $\delta$.
$$
\delta = \frac{1}{
\frac{(\compl b + \compl c)^2}{2\compl b \compl c}+
\frac{(\compl a + \compl c)^2}{2\compl a \compl c}+
\frac{(\compl a + \compl b)^2}{2\compl a \compl b}}
= $$
$$=\frac{1}{
3 + \frac{1}{2}\big(
\frac{\compl b + \compl c}{\compl a} +
\frac{\compl a + \compl c}{\compl b} +
\frac{\compl a + \compl b}{\compl c}
\big)}
$$
Observe that $a+b+c =1$ and $\compl a + \compl b + \compl c = 1/2$. Thus, $\compl b + \compl c = 1/2 - \compl a$,
$\compl a + \compl c = 1/2 - \compl b$, and $\compl a + \compl b = 1/2 - \compl c$. Hence,
$$\delta = \frac{1}{
3 + \frac{1}{2}\big(
\frac{1/2 - \compl a}{\compl a} +
\frac{1/2 - \compl b}{\compl b} +
\frac{1/2 - \compl c}{\compl c}
\big)}
=$$
$$=
\frac{1}{
\frac{3}{2} + \frac{1}{4}\big(
\frac{1}{\compl a} +
\frac{1}{\compl b} +
\frac{1}{\compl c}
\big)}.
$$
Note that since the function $t \mapsto 1/t$ is convex for $t > 0$, we have
$$\frac{1}{2}\Big(\,\frac{1}{\compl a} + \frac{1}{\compl b}\,\Big) \geq \frac{2}{\compl a + \compl b} = \frac{2}{1/2 - \compl c} = \frac{2}{c}$$
Therefore,
$$\delta\leq
\frac{1}{\frac{3}{2} + \frac{1}{c} + \frac{1}{4\compl c}}  = \frac{c(1-2c)}{1-3c^2}.$$
We conclude that the expected weight of uncut edges is at least 
$\alpha - (\alpha - \beta)\delta_{max}(c)$, where 
$\delta_{max}(c) = c(1-2c)/(1-3c^2)$.
\end{proof}

\begin{proposition} \label{prop:sdp-bound}
Let $\rho = 0.8776$ be the approximation 
factor of the \textsc{Max-Uncut Bisection} algorithm~\cite{austrin,wu2015improved}. 
Then if $\beta \le \alpha$, our \hyperref[algo:uncut]{Algorithm}~\ref{algo:uncut} outputs a solution of value at least 
$\tfrac{2\rho}{3}(n-1)((1 - \delta_{max}(c))\alpha + \delta_{max}(c) \beta)$.
\end{proposition}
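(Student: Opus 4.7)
The plan is to combine Lemma~\ref{lem:uncut-lb} with Fact~\ref{prop:rand} and a direct decomposition of the $\mathcal{F}^+$ objective into ``top-cut'' and ``within-part'' contributions. By Lemma~\ref{lem:uncut-lb}, in the regime $\alpha \ge \beta$ there exists a bisection of $V$ whose uncut weight is at least $(1-\delta_{\max}(c))\alpha + \delta_{\max}(c)\beta$. Since Algorithm~\ref{algo:uncut} invokes a $\rho$-approximate black box for \textsc{Max-Uncut Bisection}, it must return a bisection $(S,\bar S)$ with $|S|=|\bar S|=n/2$ and uncut weight
\[
U := w(S)+w(\bar S) \;\ge\; \rho\bigl((1-\delta_{\max}(c))\alpha + \delta_{\max}(c)\beta\bigr).
\]

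Next I would evaluate $\mathcal{F}^+(\mathcal{T})$ for the tree $\mathcal{T}$ produced by the algorithm. Edges with one endpoint in $S$ and the other in $\bar S$ are separated at the root, so they contribute $w_{ij}(n-n)=0$. For any edge $(i,j)$ with $i,j\in S$, the least common ancestor in $\mathcal{T}$ coincides with the LCA in the Average-Linkage subtree $\mathcal{T}_S$, so $|\mathcal{T}(i,j)|=|\mathcal{T}_S(i,j)|$ and consequently
\[
n-|\mathcal{T}(i,j)| \;=\; \tfrac{n}{2} + \bigl(\tfrac{n}{2}-|\mathcal{T}_S(i,j)|\bigr).
\]
Summing over such pairs yields a contribution of $\tfrac{n}{2}\,w(S) + \mathcal{F}^+_S(\mathcal{T}_S)$, where $\mathcal{F}^+_S$ is the Moseley--Wang objective computed on the subgraph induced by $S$ (whose ``$n$'' is $n/2$). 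An identical identity holds for $\bar S$.

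By Fact~\ref{prop:rand} applied to the instances on $S$ and on $\bar S$ (each of size $n/2$), the Average-Linkage trees satisfy $\mathcal{F}^+_S(\mathcal{T}_S)\ge \tfrac{1}{3}w(S)(n/2-2)$ and $\mathcal{F}^+_{\bar S}(\mathcal{T}_{\bar S})\ge \tfrac{1}{3}w(\bar S)(n/2-2)$. Adding everything together gives
\[
\mathcal{F}^+(\mathcal{T}) \;\ge\; U\cdot\tfrac{n}{2} + \tfrac{1}{3}U(\tfrac{n}{2}-2) \;=\; \tfrac{2(n-1)}{3}\,U - o(nU),
\]
where the $o(\cdot)$ accounts for the lower order terms we agreed to drop. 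Substituting the lower bound on $U$ from the first paragraph produces exactly
\[
\mathcal{F}^+(\mathcal{T}) \;\ge\; \tfrac{2\rho(n-1)}{3}\bigl((1-\delta_{\max}(c))\alpha + \delta_{\max}(c)\beta\bigr),
\]
as claimed.

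There is no genuine obstacle here: the statement is essentially a bookkeeping consequence of Lemma~\ref{lem:uncut-lb} (which supplies the existence of a high-value bisection), the $\rho$-approximation guarantee of the black box, and the fact that an exact bisection at the root contributes a clean $n/2$ factor to each uncut edge before the Average-Linkage recursion kicks in. The only place requiring minor care is matching the $(n-1)$ factor, which comes from combining the $\tfrac{n}{2}$ from the top split with the $\tfrac{1}{3}(\tfrac{n}{2}-2)$ from Average-Linkage and absorbing constants into the $o(1)$ term that was dropped at the start of the proof.
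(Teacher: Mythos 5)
Your argument is correct and is essentially the paper's own proof: uncut edges each pick up a factor of $n/2$ from the top bisection plus the $\tfrac13(\tfrac n2-2)$ Average-Linkage guarantee on each half, giving $\tfrac{n}{2}+\tfrac13(\tfrac n2-2)=\tfrac{2(n-1)}{3}$ per unit of uncut weight (note this identity is exact, so your $o(nU)$ correction term is unnecessary), and then Lemma~\ref{lem:uncut-lb} lower-bounds the black box's output via $\rho\cdot\texttt{OPT}_{\textsc{Max-Uncut Bisection}}$.
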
 
\begin{proof}
Let $(L,R)$ be the bisection produced by the $\rho$-approximate \textsc{Max-Uncut Bisection} algorithm. This partition satisfies:
$$w(L) + w(R) \ge \rho \texttt{OPT}_{\textsc{Max-Uncut Bisection}}$$
Our \hyperref[algo:uncut]{Algorithm}~\ref{algo:uncut} produces a tree where at the top level the left subtree is $L$, the right subtree is $R$ and both of these subtrees are then generated by \textsc{Average-Linkage}. Hence each edge $(i,j) \in L \times L$ (and similarly for edges in $R \times R$) contributes: 
$$w_{ij}\left(\tfrac n2 + \tfrac13(\tfrac n2 - 2)\right) =\tfrac23w_{ij} (n-1) $$
%$$w_{ij}\left(1/2 + \frac{n/2 - 2}{3 n/2}\right) = w_{ij}\left(1/2 + \frac{1/2 - 4/n}{3}\right) =$$
%$$=w_{ij} \left(\frac{2n}{3}  - \frac23\right)$$ 
to the objective. Thus the overall value of our solution is at least:
$$\tfrac23 (n-1)(w(L) + w(R))\ge$$
$$\ge\frac{2\rho}{3} (n -1)\cdot\texttt{OPT}_{\textsc{Max-Uncut Bisection}}$$
%$$\left(\frac23  - O\left(\frac{1}{n}\right)\right)(w(L) + w(R))\ge$$

If $\beta \le \alpha$ then by \hyperref[lem:uncut-lb]{Lemma}~\ref{lem:uncut-lb} 
we have that $\texttt{OPT}_{\textsc{Max-Uncut Bisection}} \ge \alpha - (\alpha - \beta)\delta_{max}(c)$ and the proof follows by rearranging the terms.
\end{proof}

%The factor: 0.42469996

\begin{lemma}\label{lem:final}
The approximation factor $\xi$ of our \hyperref[algo:uncut]{Algorithm}~\ref{algo:uncut}  is at least $\frac{4\rho}{3 (2\rho + 1)} \ge 0.42469$.
\end{lemma}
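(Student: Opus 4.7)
The plan is to combine Propositions~\ref{cor:beta>alpha} and~\ref{prop:sdp-bound} with the Average-Linkage lower bound of Fact~\ref{prop:rand}, treating the algorithm as running both pure Average-Linkage and Algorithm~\ref{algo:uncut} and taking the better solution. When $\alpha \leq \beta$, Proposition~\ref{cor:beta>alpha} already gives a $4/9$-approximation, and the inequality $4/9 \geq \tfrac{4\rho}{3(2\rho+1)}$ reduces to $\rho \leq 1$, which holds. So the only case to analyze is $\alpha \geq \beta$.

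In that case, I would introduce the parameters $t = \beta/\alpha \in [0,1]$ and $c = |C|/n \in [0, 1/2)$, and write $\delta = \delta_{\max}(c) = \tfrac{c(1-2c)}{1-3c^2}$. After normalizing by $\alpha$ and absorbing the $o(1)$ discrepancies between $n$, $n{-}1$, and $n{-}2$, Proposition~\ref{prop:opt-ub} bounds the normalized optimum by $1 + ct$, while the algorithm's normalized value is at least $\max\!\bigl(\tfrac{1+t}{3},\, \tfrac{2\rho((1-\delta) + \delta t)}{3}\bigr)$, by Fact~\ref{prop:rand} and Proposition~\ref{prop:sdp-bound} respectively. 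The remaining task is then to show that the resulting lower bound on the approximation ratio, call it $R(c, t)$, is at least $\tfrac{4\rho}{3(2\rho+1)}$ for every admissible $(c, t)$.

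As a function of $t$ with $c$ fixed, the first expression inside the max is increasing while the second is non-increasing (the key inequality $\delta(1+c) \leq c$ is routine to verify from the formula for $\delta$), so $R(c, t)$ is minimized at the intersection point $t^* = \tfrac{2\rho(1-\delta) - 1}{1 - 2\rho\delta}$. The crucial algebraic simplification will be the identity $1 - 2\delta = \tfrac{(1-c)^2}{1 - 3c^2}$, immediate from the formula for $\delta$; substituting this collapses the value at the intersection to $R(c, t^*) = \tfrac{2\rho(1-c)}{3(1 - (3-2\rho)c^2)}$.

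The final step is to minimize $R(c, t^*)$ over $c \in [0, 1/2]$. Setting the numerator of the derivative to zero yields the quadratic $(3-2\rho)c^2 - 2(3-2\rho)c + 1 = 0$, whose smaller root $1 - \sqrt{1 - 1/(3-2\rho)}$ lies outside $[0, 1/2]$ for $\rho \geq 5/6$ (in particular for $\rho = 0.8776$); hence $R(\cdot, t^*)$ is monotonically decreasing on $[0, 1/2]$ and its infimum is attained at (the limit) $c = 1/2$, where a direct substitution gives $\tfrac{4\rho}{3(1+2\rho)}$, as claimed. The main obstacle is spotting the identity $1 - 2\delta = (1-c)^2/(1-3c^2)$; once it is in hand, the rest is routine algebra.
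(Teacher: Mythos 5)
Your proposal is correct and follows essentially the same route as the paper: both reduce the $\alpha\ge\beta$ case to combining the Average-Linkage bound with Proposition~\ref{prop:sdp-bound}, collapse the worst case to the single-variable function $\tfrac{2\rho(1-c)}{3(1-(3-2\rho)c^2)}$, and show it is decreasing on $[0,\tfrac12]$ so the minimum $\tfrac{4\rho}{3(2\rho+1)}$ is attained at $c=\tfrac12$. Your normalization by $\alpha$ and explicit intersection point $t^*$ is just a cleaner packaging of the paper's threshold case analysis on $\beta/\alpha$ (one should also note $t^*\in[0,1]$, which holds since $\delta_{\max}(c)\le 1/6$ and $\rho\le 1$).
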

\begin{proof}
First, note that if $\beta \ge \alpha$ then by \hyperref[cor:beta>alpha]{Proposition}~\ref{cor:beta>alpha} the approximation is at least $0.44$. Hence it suffices to only consider the case when $\beta \le \alpha$. 
Recall that by \hyperref[prop:rand]{Fact}~\ref{prop:rand}, \textsc{Average-Linkage} outputs a solution of value $\frac13(\alpha + \beta)(n-2)$ and by \hyperref[prop:opt-ub]{Proposition}~\ref{prop:opt-ub}, we have $\texttt{OPT} \le \alpha(n-2) + |C| \beta$.
Hence if $\frac13(\alpha + \beta)(n-2) \ge \xi\left(\alpha(n-2) + |C| \beta\right)$ then the desired approximation holds.

Thus we only need to consider the case when $\frac13(\alpha + \beta)(n-2) \le \xi\left(\alpha(n-2) + |C|\beta\right)$ or equivalently: 
$$\frac13(\alpha + \beta) \le \xi\left(\alpha + \tfrac{|C|}{n-2}\beta\right)\iff$$
$$\iff\beta \le \frac{3 \xi - 1}{1 - 3 \tfrac{|C|}{n-2} \xi} \alpha.$$
Let $c_1 = \tfrac{2\rho}{3} (1 - \delta_{max}(c))$ and $c_2 = \tfrac{2\rho}{3} \delta_{max}(c)$.
In this case by \hyperref[prop:sdp-bound]{Proposition}~\ref{prop:sdp-bound}, our \hyperref[algo:uncut]{Algorithm}~\ref{algo:uncut}  gives value at least $c_1(n-1) \alpha + c_2(n-1) \beta$.
Hence it suffices to show that $c_1(n-2) \alpha + c_2 (n-2)\beta \ge \xi (\alpha(n-2) + |C| \beta)$.
Or equivalently that:
$$\beta (\tfrac{|C|}{n-2} \xi - c_2) \le \alpha (c_1 - \xi)$$
Using the bound on $\beta$ above it suffices to show that:
$$\frac{3 \xi - 1}{1 - 3 \tfrac{|C|}{n-2} \xi} (\tfrac{|C|}{n-2} \xi - c_2) \le c_1 - \xi.$$
After simplifying this expression the above bound holds for:
$$\xi \le \frac{c_1 - c_2}{1 + 3 \tfrac{cn}{n-2} c_1 - \tfrac{cn}{n-2} - 3 c_2}.$$
Hence it suffices to find the minimum of the RHS over $c \in [0,\tfrac12-\tfrac1n]$.
Plugging in the expressions for $c_1$ and $c_2$ after simplification the RHS is equal to:
$$\frac23 \frac{\rho(1 - c)}{2 c^2 \rho + (1 - 3c^2)}$$

Differentiating over $c$ one can show that the minimum of this expression is attained for $c = \tfrac12-\tfrac1n$. Indeed, the numerator of the derivative is quadratic function with negative leading coefficient whose roots are $1 \pm \frac{\sqrt{t(t + 1)}}{t}$ for $t = 2 \rho - 3$. The left root is approximately $0.545$ and hence the derivative is negative on $[0,\tfrac12-\tfrac1n]$.
The value at the minimum $c = \tfrac12-\tfrac1n$ is thus equal\footnote{Observe that our analysis based on \textsc{Max-Uncut Bisection} is almost tight since even if we were given exact access to the optimum (i.e., $\rho=1$), the approximation ratio for HC would only slightly increase to $\tfrac{4}{9}=.444$.} to ($\rho=0.8776$):
$$\frac{4\rho}{3 (2\rho + 1)} \ge 0.42469$$
\end{proof}
\begin{comment}
\begin{lemma}
The approximation factor of the algorithm is at least
$(\lambda_1+\lambda_2)^{-1}$, where
$\lambda_1 = 3(1 - c)/(2\rho (1 - 2 \delta_{max}(c)))$ and
$\lambda_2 = 3 c - 2\lambda_2 \rho \delta_{max}(c)$.
\end{lemma}

TBD

\begin{corollary}
The approximation ratio of the algorithm is lower bounded by 
$$\frac{2\rho(1 - c)}{3(1 -  c^2 (3 - 2\rho))}.$$
\end{corollary}
\end{comment}

\section{Hardness of Approximation}
\label{sec:hardness}
In this section, we prove that maximizing the Moseley-Wang HC objective~(\ref{eq:obj})  is APX-hard: 
\begin{theorem}\label{th:hardness}
Under the \textsc{Small Set Expansion} (\textsc{SSE}) hypothesis, there exists $\eps > 0$, such that it is NP-hard to approximate the Moseley-Wang HC objective function~(\ref{eq:obj})  within a factor $(1 - \eps)$. 
\end{theorem}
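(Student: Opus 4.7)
The plan is to reduce from the Small Set Expansion (\textsc{SSE}) problem, exploiting the same tight link between the Moseley-Wang objective and balanced cuts that underlies our algorithm. The decomposition $\mw{\ct} = \sum_{v} (n - |v|)\cdot w(L_v, R_v)$, where the sum ranges over internal nodes $v$ of $\ct$ with leaf sets $L_v, R_v$ under its two children and $|v|$ is the number of leaves under $v$, makes this connection explicit: an optimum tree must find a nested sequence of low-weight cuts, and each edge cut near the root (where $|v|$ is close to $n$) contributes almost nothing to $\mw{\ct}$. Under \textsc{SSE}, a standard reduction yields constant-factor hardness for Balanced Min-Cut (equivalently Max-Uncut Bisection) even with $0/1$ weights, and this is the primitive I would leverage.

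Concretely, I would take an instance $G$ of Balanced Min-Cut derived from \textsc{SSE} in the usual way, so that YES instances admit a balanced cut of weight at most $\eta m$ while in NO instances every balanced cut has weight at least $(1 - \eta)m$, and use the $0/1$ adjacency matrix of $G$ directly as the similarity matrix for hierarchical clustering on $n = |V(G)|$ leaves. In the YES case, the tree whose top split is the good bisection and whose two subtrees are built by \textsc{Average-Linkage} achieves $\mw{\ct} \ge (1 - \eta)\cdot \tfrac{2(n-1)}{3}\cdot m - o(nm)$, since only $\eta m$ edges are lost at the top level and each remaining edge contributes roughly $\tfrac{2(n-1)}{3}$ per unit weight, as in the proof of \hyperref[prop:sdp-bound]{Proposition}~\ref{prop:sdp-bound}. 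Hence in YES instances the optimum is at least $c_1\cdot nm$ for an explicit constant $c_1$.

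In the NO case, I would show that no HC tree can exceed $c_2\cdot nm$ for some strictly smaller constant $c_2 < c_1$. At the root, $w(L_v, R_v) \ge (1-\eta)m$ for any balanced partition of $V$; for an unbalanced split $(S, V\setminus S)$, the \textsc{SSE} property applied to the smaller side forces a $(1-\eta)$ fraction of its incident edges to cross. Applying this argument inductively to every internal node $v$ (using that the subgraph induced on the leaves of $v$ inherits enough expansion from $G$) and summing over the tree via the decomposition above, one obtains $\mw{\ct} \le c_2\cdot nm$. The resulting gap $c_1 - c_2 > 0$, transported back through the reduction, yields the desired $(1-\varepsilon)$-hardness for a fixed $\varepsilon > 0$.

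The main obstacle I anticipate is the soundness direction: since $\mw{\ct}$ aggregates contributions from all levels, an adversarial tree could conceivably try to compensate a bad top split with very clever structure deep down. Controlling this requires either (i) a telescoping argument down the tree showing that the induced subgraph at each internal node still satisfies a balanced-cut hardness at the relevant scale, so the bound $(1-\eta)w(\text{subgraph})$ propagates and ensures that a constant fraction of the weighted mass $(n - |v|)\cdot w(L_v, R_v)$ is wasted at levels with large $|v|$; or (ii) a direct gadget-based construction analogous to the one used in~\cite{vaggosSODA} for Dasgupta's objective, adapted to the complementary Moseley-Wang objective. Making the expansion inheritance rigorous across all scales, and ensuring that the per-level constant loss does not accumulate too much, is the technical heart of the argument.
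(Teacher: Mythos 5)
Your reduction has the right flavor (use SSE to force any tree to waste a constant fraction of edge weight near the root), but as written it has two genuine gaps, one in completeness and one in soundness, and the paper's proof avoids both by a different use of the SSE machinery.

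\emph{Completeness.} Starting from a single good bisection and filling in each side with \textsc{Average-Linkage} only certifies a YES-case value of about $\tfrac{2}{3}(n-1)m$, since \textsc{Average-Linkage} only guarantees a $\tfrac13$ fraction inside each half. But the trivial upper bound on the objective is $(n-2)W\approx nm$, and no soundness argument of the kind available here can push the NO-case optimum below $\tfrac23 nm$ --- the paper's own NO-case bound is $nW(1-\Theta(\sqrt{\eps}))$, which is \emph{larger} than your completeness value, so the gap closes in the wrong direction. The paper instead invokes the $q$-set version of the Raghavendra--Steurer--Tulsiani theorem (\hyperref[thm:rst]{Theorem}~\ref{thm:rst}): in the YES case there are $q$ disjoint sets of measure $1/q$, each with expansion $\eps+o(\eps)$, and the tree that first separates these sets puts every intra-set edge under an LCA with at most $n/q$ leaves, so each such edge is multiplied by at least $n(1-1/q)$. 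This yields completeness $nW(1-1/q-\eps-o(\eps))$, which approaches the trivial bound $nW$ as $q\to\infty$; that is what makes a gap possible at all. Relatedly, the primitive you assume --- that under SSE every balanced cut of a NO instance cuts a $(1-\eta)$ fraction of \emph{all} edges --- is much stronger than what SSE gives; the actual NO guarantee is an expansion lower bound of order $\sqrt{\eps}$, not $1-\eta$.

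\emph{Soundness.} Your plan to induct down the tree and argue that each induced subgraph ``inherits enough expansion'' is both unnecessary and unlikely to be salvageable: the NO case of SSE bounds $|E(S,V\setminus S)|$ in the \emph{original} graph, and those crossing edges may leave the current subtree entirely, so induced subgraphs of an SSE NO-instance need not expand. The paper's argument is a one-shot version of your idea that sidesteps this: take the \emph{highest} tree node $b$ with $n/3\le|V_b|\le 2n/3$ (which exists in any binary tree), apply the NO-case expansion bound once to $V_b$ to get that a $C\sqrt{\eps}/3$ fraction of all edges crosses $(V_b,V\setminus V_b)$, and observe that each such edge has its LCA at $b$'s parent or above, hence is multiplied by at most $n/3$. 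This alone gives the upper bound $nW(1-2C\sqrt{\eps}/9)$, and comparing with the completeness bound for small $\eps$ and large $q$ finishes the proof. If you want to keep your multi-level telescoping, you would need to make the expansion-inheritance step rigorous, which is precisely the part you flag as open and which the published proof shows you do not need.
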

Initially introduced by Raghavendra and Steurer~\cite{raghavendra2010graph}, \textsc{SSE} has been used to prove improved hardness results for optimization problems including \textsc{Balanced Separator} and \textsc{Minimum Linear Arrangement}~\cite{raghavendra2012reductions}.

Given a $d$-regular, unweighted graph $G = (V, E)$ and $S \subseteq V$, let $\mu(S) := |S| / |V|$ and $\Phi(S) := |E(S, V \setminus S)| / d|S|$. Raghavendra et al.\cite{raghavendra2012reductions} prove the following strong hardness result.
(While it is not explicitly stated that the result holds for regular graphs, it can be checked that their reduction produces a regular graph~\cite{Tulsiani19}.)

\begin{theorem}[Theorem 3.6 of \cite{raghavendra2012reductions}]
Assuming the \textsc{SSE}, for any $q \in \NN$ and $\eps, \gamma > 0$, given a regular graph $G = (V, E)$, it is NP-hard to distinguish the following two cases.

\begin{itemize}
\item YES: There exist $q$ disjoint sets $S_1, \dots, S_q \subseteq V$ such that for all $\ell \in [q]$, 
\[
\mu(S_{\ell}) = 1/q \qquad \mbox{and} \qquad \Phi(S_{\ell}) \leq \eps + o(\eps).
\]
\item NO: For all sets $S \subseteq V$, 
\[
\Phi(S) \geq \phi_{1-\eps/2}(\mu(S)) - \gamma / \mu(S)
\]
where $\phi_{1-\eps/2}(\mu(S))$ is the expansion of the sets of volume $\mu(S)$ in the infinite Gaussian graph with correlation $1 - \eps/2$. 
\end{itemize}
\label{thm:rst}
\end{theorem}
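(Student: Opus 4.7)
My plan is a direct reduction from the SSE-hardness in Theorem~\ref{thm:rst}. Given the $d$-regular graph $G = (V, E)$ produced by that theorem (with parameters $\eps, \gamma, q$ to be tuned), I feed it to the MW problem with similarities $w_{ij} = \indicator[(i,j) \in E]$, and write $W = |E| = dn/2$. Since $\mw{\ct} = nW - \Dfun(\ct) = nW - \sum_v |T_v| w(L_v, R_v)$, a completeness/soundness gap for $\mw{\cdot}$ is equivalent to one for the minimum Dasgupta cost, which is what I will produce.

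For completeness (YES case), I construct an explicit binary tree whose top structure is a balanced binary tree on the $q$ disjoint clusters $S_1, \dots, S_q$ (each of measure $1/q$), with the within-cluster structure being an arbitrary binary tree inside each $S_\ell$. Every within-cluster edge $(i,j) \in S_\ell \times S_\ell$ has $|\ct(i,j)| \leq |S_\ell| = n/q$, so it contributes at least $n(1 - 1/q)$ per unit weight to the MW objective. Summing the expansion bounds $\Phi(S_\ell) \leq \eps + o(\eps)$ over $\ell$ shows that the total cross-cluster weight is at most $(\eps + o(\eps))W$, hence the within-cluster weight is at least $(1 - \eps - o(\eps))W$. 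This yields $\mw{\topt} \geq (1 - 1/q)(1 - \eps - o(\eps)) \cdot nW$.

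For soundness (NO case), I argue $\Dfun(\ct) \geq \Omega(\sqrt{\eps}) \cdot nW$ for every binary tree $\ct$. The main tool is a balanced-subtree extraction lemma: descending from the root along the heavier child until the subtree size first drops to at most $2n/3$ reaches an internal node $v^*$ with $|T_{v^*}| \in (n/3, 2n/3]$ and hence $\mu(T_{v^*}) \in (1/3, 2/3]$. Every edge across $(T_{v^*}, V \setminus T_{v^*})$ has its LCA strictly above $v^*$, so $|\ct(i,j)| > |T_{v^*}| > n/3$, contributing at least $(n/3) \cdot w(T_{v^*}, V \setminus T_{v^*})$ to $\Dfun(\ct)$. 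The NO condition of Theorem~\ref{thm:rst} gives $\Phi(T_{v^*}) \geq \phi_{1-\eps/2}(\mu(T_{v^*})) - \gamma/\mu(T_{v^*})$, and the Gaussian isoperimetric profile satisfies $\phi_{1-\eps/2}(\mu) = \Theta(\sqrt{\eps})$ uniformly on $\mu \in [1/3, 2/3]$ (this follows from $\phi_{1-\eps/2}(1/2) = \arccos(1 - \eps/2)/\pi \approx \sqrt{\eps}/\pi$ together with standard Gaussian surface area estimates). Choosing $\gamma$ sufficiently small relative to $\sqrt{\eps}$ absorbs the additive error and yields $w(T_{v^*}, V \setminus T_{v^*}) = \Omega(\sqrt{\eps})\, W$, whence $\mw{\ct} \leq (1 - \Omega(\sqrt{\eps})) \, nW$.

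The completeness-to-soundness ratio is at least $(1 - 1/q - \eps)/(1 - \Omega(\sqrt{\eps})) \geq 1 + \Omega(\sqrt{\eps})$ once I fix a small $\eps > 0$, a large $q = q(\eps)$, and a small $\gamma = \gamma(\eps)$, giving NP-hardness of approximation within factor $1 - \Omega(\sqrt{\eps}) < 1$; since the construction uses $0$--$1$ similarities, this also matches the APX-hardness claim for $0$--$1$ inputs advertised in the introduction. The main technical obstacle is the quantitative lower bound on the Gaussian expansion profile $\phi_{1-\eps/2}(\mu)$ on $\mu \in [1/3, 2/3]$ and the careful parameter balancing that ensures the additive $\gamma/\mu$ term from Theorem~\ref{thm:rst} is dominated by the main $\phi$ contribution; both are standard but must be written out explicitly.
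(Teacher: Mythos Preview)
There is a labeling mismatch: the statement you were given, Theorem~\ref{thm:rst}, is a \emph{cited} result (Theorem~3.6 of Raghavendra--Steurer--Tulsiani) that the paper invokes as a black box and does not prove. Your proposal does not attempt to prove it either; you explicitly \emph{use} Theorem~\ref{thm:rst} as the starting point of a reduction. What you have actually sketched is a proof of Theorem~\ref{th:hardness} (APX-hardness of the Moseley--Wang objective under SSE), not of Theorem~\ref{thm:rst}.

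Viewed as a proof of Theorem~\ref{th:hardness}, your argument is correct and matches the paper's own proof essentially step for step: the same YES-case construction (a tree that first separates the $q$ clusters and is arbitrary inside each, giving $\mw{\topt}\ge nW(1-1/q-\eps-o(\eps))$), the same balanced-subtree extraction in the NO case (a node $b$ with $|V_b|\in[n/3,2n/3]$), the same appeal to $\phi_{1-\eps/2}(\mu)=\Theta(\sqrt{\eps})$ for $\mu\in[1/3,2/3]$, and the same choice of $\gamma$ small enough to be absorbed into the $\sqrt{\eps}$ term. The only cosmetic difference is that you phrase the NO-case bound through the Dasgupta cost $\Dfun$, whereas the paper bounds $\mw{\ct}$ directly; since $\mw{\ct}=nW-\Dfun(\ct)$, the two formulations are equivalent.
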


\begin{proof}[Proof of Theorem~\ref{th:hardness}]
Let us consider the instance of Hierarchical Clustering defined by the same graph where each pair has weight 1 if there is an edge, and 0 otherwise. Then $W = |E|$ is the total weight.

\begin{itemize}
\item YES: The fraction of edges crossing between different $S_i$'s is at most $\eps + o(\eps)$, and all edges inside some $S_i$ are multiplied by at least $n(1 - 1/q)$ in the objective function. So the objective function for Hierarchical clustering is at least 
\[
(1 - \eps - o(\eps))W\cdot(1 - \tfrac{1}{q})n \geq nW (1 - \tfrac1q - \eps - o(\eps)).
\]

\item NO: Consider an arbitrary binary tree $\ct$ that maximizes the Moseley-Wang objective function~(\ref{eq:obj}). 
For a tree node $a \in \ct$, let $\ct_a$ be the subtree of $\ct$ rooted at $a$, and $V_a \subseteq V$ be the set of graph vertices corresponding the leaves of $T_a$. 

Let $b \in \ct$ be a highest node such that $n/3 \leq |V_b| \leq 2n/3$ (such a node always exists in a binary tree). By \hyperref[thm:rst]{Theorem}~\ref{thm:rst}, we have
\[
\Phi(V_b) \geq \phi_{1-\eps/2}(\mu(V_b)) - \gamma / \mu(V_b) \geq C \sqrt{\eps}
\]
for some absolute constant $C$. Here we use the fact that 
\[
\phi_{1-\eps/2}(\mu(V_b)) \geq \Omega(\sqrt{\eps}) \mbox{ for } \mu(V_b) \in [1/3, 2/3]
\]
and take $\gamma$ small enough depending on $\eps$.

So the total fraction of edges in $E(V_b, V \setminus V_b)$ is at least 
\[
\mu(V_b) \cdot \Phi(V_b) \geq \tfrac{C\sqrt{\eps}}{3}. 
\]
Note that edges in $E(T_b, V \setminus T_b)$ will multiplied by at most $n /3$ in the objective function. 
(Let $a$ be the parent of $b$. Then $|V_a| > 2n / 3$ by the choice of $b$ and for any edge crossing $V_b$, the the least common ancestors of the two endpoints will be $a$ or one if its ancestors.) 
Therefore, the objective function is at most 
\[
nW - \tfrac{C\sqrt{\eps}}{3}W \cdot \tfrac{2n}{3} = nW(1 - \tfrac{2 C \sqrt{\eps}}{9}).
\]
\end{itemize}

Therefore, the value is at least $nW(1 - 1/q - \eps - o(\eps))$ in the YES case and 
$nW(1 - (2C \sqrt{\eps}/9))$ in the NO case. By taking $\eps > 0$ sufficiently small and $q$ arbitrarily large, there is a constant gap between the YES case value and the NO case value. 
\end{proof}

\iffalse
\paragraph{Remark} On the regularity assumption, we know that many of the well-known reductions are based on it.
\fi

\section{Conclusion}
\label{sec:conclusion}

In this paper, we presented a .4246 approximation algorithm for the hierarchical clustering problem with pairwise similarities under the Moseley-Wang objective~(\ref{eq:obj}), which is the complement to Dasgupta's objective. Our algorithm uses  \textsc{Max-Uncut Bisection} as a black box and improves upon previous state-of-the-art approximation algorithms that were more complicated and only guaranteed .3363 of the optimum value. In terms of hardness of approximation, under the \textsc{Small Set Expansion} hypothesis, we prove that even for unweighted graphs, there exists $\eps >0$, such that it is NP-hard to approximate the objective function~(\ref{eq:obj})  within a factor of $(1-\eps)$.

\bibliographystyle{alpha}
\bibliography{main.bib}

\newcommand{\etalchar}[1]{$^{#1}$}
\begin{thebibliography}{CAKMTM19}

\bibitem[ABG16]{austrin}
Per Austrin, Siavosh Benabbas, and Konstantinos Georgiou.
\newblock Better balance by being biased: A 0.8776-approximation for max
  bisection.
\newblock {\em ACM Transactions on Algorithms (TALG)}, 13(1):2, 2016.

\bibitem[ABV17]{awasthi2017local}
Pranjal Awasthi, Maria~Florina Balcan, and Konstantin Voevodski.
\newblock Local algorithms for interactive clustering.
\newblock {\em The Journal of Machine Learning Research}, 18(1):75--109, 2017.

\bibitem[ARV08]{ARV08}
Sanjeev Arora, Satish Rao, and Umesh~V. Vazirani.
\newblock Geometry, flows, and graph-partitioning algorithms.
\newblock {\em Commun. {ACM}}, 51(10):96--105, 2008.

\bibitem[BB08]{balcan2008clustering}
Maria-Florina Balcan and Avrim Blum.
\newblock Clustering with interactive feedback.
\newblock In {\em International Conference on Algorithmic Learning Theory},
  pages 316--328. Springer, 2008.

\bibitem[BBD{\etalchar{+}}17]{bateni2017affinity}
Mohammadhossein Bateni, Soheil Behnezhad, Mahsa Derakhshan, MohammadTaghi
  Hajiaghayi, Raimondas Kiveris, Silvio Lattanzi, and Vahab Mirrokni.
\newblock Affinity clustering: Hierarchical clustering at scale.
\newblock In I.~Guyon, U.~V. Luxburg, S.~Bengio, H.~Wallach, R.~Fergus,
  S.~Vishwanathan, and R.~Garnett, editors, {\em Advances in Neural Information
  Processing Systems 30}, pages 6864--6874. Curran Associates, Inc., 2017.

\bibitem[CAKMTM19]{cohenaddad}
Vincent Cohen-Addad, Varun Kanade, Frederik Mallmann-Trenn, and Claire Mathieu.
\newblock Hierarchical clustering: Objective functions and algorithms.
\newblock {\em Journal of the ACM (JACM)}, 66(4):26, 2019.

\bibitem[CC17]{vaggosSODA}
Moses Charikar and Vaggos Chatziafratis.
\newblock Approximate hierarchical clustering via sparsest cut and spreading
  metrics.
\newblock In {\em Proceedings of the Twenty-Eighth Annual ACM-SIAM Symposium on
  Discrete Algorithms}, pages 841--854. Society for Industrial and Applied
  Mathematics, 2017.

\bibitem[CCFM04]{charikar2004incremental}
Moses Charikar, Chandra Chekuri, Tom{\'a}s Feder, and Rajeev Motwani.
\newblock Incremental clustering and dynamic information retrieval.
\newblock {\em SIAM Journal on Computing}, 33(6):1417--1440, 2004.

\bibitem[CCN19]{charikar2019hierarchical}
Moses Charikar, Vaggos Chatziafratis, and Rad Niazadeh.
\newblock Hierarchical clustering better than average-linkage.
\newblock In {\em Proceedings of the Thirtieth Annual ACM-SIAM Symposium on
  Discrete Algorithms}, pages 2291--2304. SIAM, 2019.

\bibitem[CCNY18]{charikar2018hierarchical}
Moses Charikar, Vaggos Chatziafratis, Rad Niazadeh, and Grigory Yaroslavtsev.
\newblock Hierarchical clustering for euclidean data.
\newblock {\em arXiv preprint arXiv:1812.10582}, 2018.

\bibitem[CNC18]{vaggosICML}
Vaggos Chatziafratis, Rad Niazadeh, and Moses Charikar.
\newblock Hierarchical clustering with structural constraints.
\newblock In {\em International Conference on Machine Learning}, pages
  773--782, 2018.

\bibitem[Das02]{dasgupta2002performance}
Sanjoy Dasgupta.
\newblock Performance guarantees for hierarchical clustering.
\newblock In {\em International Conference on Computational Learning Theory},
  pages 351--363. Springer, 2002.

\bibitem[Das16]{D16}
Sanjoy Dasgupta.
\newblock A cost function for similarity-based hierarchical clustering.
\newblock In {\em Proceedings of the forty-eighth annual ACM symposium on
  Theory of Computing}, pages 118--127. ACM, 2016.

\bibitem[DBE{\etalchar{+}}15]{diez2015novel}
Ibai Diez, Paolo Bonifazi, I{\~n}aki Escudero, Beatriz Mateos, Miguel~A
  Mu{\~n}oz, Sebastiano Stramaglia, and Jesus~M Cortes.
\newblock A novel brain partition highlights the modular skeleton shared by
  structure and function.
\newblock {\em Scientific reports}, 5:10532, 2015.

\bibitem[ESBB98]{eisen1998cluster}
Michael~B Eisen, Paul~T Spellman, Patrick~O Brown, and David Botstein.
\newblock Cluster analysis and display of genome-wide expression patterns.
\newblock {\em Proceedings of the National Academy of Sciences},
  95(25):14863--14868, 1998.

\bibitem[EZK18]{emamjomeh2018adaptive}
Ehsan Emamjomeh-Zadeh and David Kempe.
\newblock Adaptive hierarchical clustering using ordinal queries.
\newblock In {\em Proceedings of the Twenty-Ninth Annual ACM-SIAM Symposium on
  Discrete Algorithms}, pages 415--429. Society for Industrial and Applied
  Mathematics, 2018.

\bibitem[GR69]{GR69}
John~C Gower and Gavin~JS Ross.
\newblock Minimum spanning trees and single linkage cluster analysis.
\newblock {\em Journal of the Royal Statistical Society: Series C (Applied
  Statistics)}, 18(1):54--64, 1969.

\bibitem[Jai10]{J10}
Anil~K. Jain.
\newblock Data clustering: 50 years beyond k-means.
\newblock {\em Pattern Recognition Letters}, 31(8):651--666, 2010.

\bibitem[JMF99]{JMF99}
Anil~K. Jain, M.~Narasimha Murty, and Patrick~J. Flynn.
\newblock Data clustering: {A} review.
\newblock {\em {ACM} Comput. Surv.}, 31(3):264--323, 1999.

\bibitem[JS68]{jardine1968model}
N~Jardine and R~Sibson.
\newblock A model for taxonomy.
\newblock {\em Mathematical Biosciences}, 2(3-4):465--482, 1968.

\bibitem[LNRW10]{lin2010general}
Guolong Lin, Chandrashekhar Nagarajan, Rajmohan Rajaraman, and David~P
  Williamson.
\newblock A general approach for incremental approximation and hierarchical
  clustering.
\newblock {\em SIAM Journal on Computing}, 39(8):3633--3669, 2010.

\bibitem[LRU14]{leskovec2014mining}
Jure Leskovec, Anand Rajaraman, and Jeffrey~David Ullman.
\newblock {\em Mining of massive datasets}.
\newblock Cambridge university press, 2014.

\bibitem[MMO08]{mann2008use}
Charles~F Mann, David~W Matula, and Eli~V Olinick.
\newblock The use of sparsest cuts to reveal the hierarchical community
  structure of social networks.
\newblock {\em Social Networks}, 30(3):223--234, 2008.

\bibitem[MRS08]{MRS08}
Christopher~D. Manning, Prabhakar Raghavan, and Hinrich Sch{\"{u}}tze.
\newblock {\em Introduction to information retrieval}.
\newblock Cambridge University Press, 2008.

\bibitem[MW17]{joshNIPS}
Benjamin Moseley and Joshua Wang.
\newblock Approximation bounds for hierarchical clustering: Average linkage,
  bisecting k-means, and local search.
\newblock In {\em Advances in Neural Information Processing Systems}, pages
  3097--3106, 2017.

\bibitem[Pla06]{plaxton2006approximation}
C~Greg Plaxton.
\newblock Approximation algorithms for hierarchical location problems.
\newblock {\em Journal of Computer and System Sciences}, 72(3):425--443, 2006.

\bibitem[RP16]{royNIPS}
Aurko Roy and Sebastian Pokutta.
\newblock Hierarchical clustering via spreading metrics.
\newblock In {\em Advances in Neural Information Processing Systems}, pages
  2316--2324, 2016.

\bibitem[RS10]{raghavendra2010graph}
Prasad Raghavendra and David Steurer.
\newblock Graph expansion and the unique games conjecture.
\newblock In {\em Proceedings of the forty-second ACM symposium on Theory of
  computing}, pages 755--764. ACM, 2010.

\bibitem[RST12]{raghavendra2012reductions}
Prasad Raghavendra, David Steurer, and Madhur Tulsiani.
\newblock Reductions between expansion problems.
\newblock In {\em 2012 IEEE 27th Conference on Computational Complexity}, pages
  64--73. IEEE, 2012.

\bibitem[SKK{\etalchar{+}}00]{steinbach2000}
Michael Steinbach, George Karypis, Vipin Kumar, et~al.
\newblock A comparison of document clustering techniques.
\newblock In {\em KDD workshop on text mining}, volume 400, pages 525--526.
  Boston, 2000.

\bibitem[SS62]{sneath1962numerical}
Peter~HA Sneath and Robert~R Sokal.
\newblock Numerical taxonomy.
\newblock {\em Nature}, 193(4818):855--860, 1962.

\bibitem[SS13]{SS13}
Peter Sanders and Christian Schulz.
\newblock Kahip v0.53 - karlsruhe high quality partitioning - user guide.
\newblock {\em CoRR}, abs/1311.1714, 2013.

\bibitem[TLM10]{tumminello2010correlation}
Michele Tumminello, Fabrizio Lillo, and Rosario~N Mantegna.
\newblock Correlation, hierarchies, and networks in financial markets.
\newblock {\em Journal of Economic Behavior \& Organization}, 75(1):40--58,
  2010.

\bibitem[Tul19]{Tulsiani19}
Madhur Tulsiani, 2019.
\newblock Personal Communication.

\bibitem[VD16]{dasguptaICML}
Sharad Vikram and Sanjoy Dasgupta.
\newblock Interactive bayesian hierarchical clustering.
\newblock In {\em International Conference on Machine Learning}, pages
  2081--2090, 2016.

\bibitem[WC00]{wagstaff2000clustering}
Kiri Wagstaff and Claire Cardie.
\newblock Clustering with instance-level constraints.
\newblock {\em AAAI/IAAI}, 1097:577--584, 2000.

\bibitem[WCR{\etalchar{+}}01]{wagstaff2001constrained}
Kiri Wagstaff, Claire Cardie, Seth Rogers, Stefan Schr{\"o}dl, et~al.
\newblock Constrained k-means clustering with background knowledge.
\newblock In {\em ICML}, volume~1, pages 577--584, 2001.

\bibitem[WDX15]{wu2015improved}
Chenchen Wu, Donglei Du, and Dachuan Xu.
\newblock An improved semidefinite programming hierarchies rounding
  approximation algorithm for maximum graph bisection problems.
\newblock {\em Journal of Combinatorial Optimization}, 29(1):53--66, 2015.

\bibitem[WJ63]{ward1963hierarchical}
Joe~H Ward~Jr.
\newblock Hierarchical grouping to optimize an objective function.
\newblock {\em Journal of the American statistical association},
  58(301):236--244, 1963.

\bibitem[YV17]{yaroslavtsev2017massively}
Grigory Yaroslavtsev and Adithya Vadapalli.
\newblock Massively parallel algorithms and hardness for single-linkage
  clustering under $ell\_p$-distances.
\newblock {\em arXiv preprint arXiv:1710.01431}, 2017.

\end{thebibliography}
%\appendix
%\input{appendix.tex}

\end{document}